\long\def\symbolfootnote[#1]#2{\begingroup%
\def\thefootnote{\fnsymbol{footnote}}\footnote[#1]{#2}\endgroup}
\newtheorem{pro}{\rm{\underline{\textbf{Problem}}}}
\newtheorem{prop}{\rm{\underline{\textbf{Proposition}}}}
\newtheorem{thm}{\rm{\underline{\textbf{Theorem}}}}
\newtheorem{lem}{\rm{\underline{\textbf{Lemma}}}}
\begin{document}
\title{Cost minimization for fading channels with energy harvesting and conventional energy}
\author{~~~~Xin~Kang,~\IEEEmembership{Member,~IEEE}, ~Yeow-Khiang
Chia,~\IEEEmembership{Member,~IEEE}, ~Chin Keong
Ho,~\IEEEmembership{Member,~IEEE}, ~and \newline ~~Sumei
Sun,~\IEEEmembership{Senior Member,~IEEE}
\thanks{X. Kang, Y. -K. Chia, , C. K. Ho and S. Sun are with Institute for Infocomm Research, 1 Fusionopolis Way,
$\#$21-01 Connexis, South Tower, Singapore 138632 (E-mail: \{xkang,
chiayk, hock, sunsm\}@i2r.a-star.edu.sg).} } \markboth{X. Kang et.
al, ``Cost minimization for fading channels with energy harvesting
and conventional energy''} {} \maketitle

\begin{abstract}
In this paper, we investigate resource allocation strategies for a
point-to-point wireless communications system with hybrid energy
sources consisting of an energy harvester and a conventional energy
source. In particular, as an incentive to promote the use of
renewable energy, we assume that the renewable energy has a lower
cost than the conventional energy. Then, by assuming that the
non-causal information of the energy arrivals and the channel power
gains are available, we minimize the total energy cost of such a
system over $N$ fading slots under a proposed outage constraint
together with the energy harvesting constraints. The outage
constraint requires a minimum fixed number of slots to be reliably
decoded, and thus leads to a mixed-integer programming formulation
for the optimization problem. This constraint is useful, for
example, if an outer code is used to recover all the data bits.
Optimal linear time algorithms are obtained for two extreme cases,
i.e., the number of outage slot is $1$ or $N-1$. For the general
case, a lower bound based on the linear programming relaxation, and
two suboptimal algorithms are proposed. It is shown that the
proposed suboptimal algorithms exhibit only a small gap from the
lower bound. We then extend the proposed algorithms to the
multi-cycle scenario in which the outage constraint is imposed for
each cycle separately. Finally, we investigate the resource
allocation strategies when only causal information on the energy
arrivals and only channel statistics is available. It is shown that
the greedy energy allocation is optimal for this scenario.
\end{abstract}

\begin{keywords}
Energy Harvesting, Hybrid Power Supply, Green Wireless
Communications, Block Fading Channels, Optimal Resource Allocation,
Non-convex Optimization, Mixed-integer Programming.
\end{keywords}

\section{Introduction}
Driven by environmental concerns, green wireless communications have
recently attracted increasing attention from both industry and
academia. It is reported in \cite{ICTtrends} that the world-wide
cellular networks consume about sixty billion kilowatt hour (kWh) of
energy per year, which result in a few hundred million tons of
carbon dioxide emission yearly. These figures are expected to
increase rapidly in the near future if no further actions are taken.
On the other hand, it is pointed out in \cite{LiYe} that, with the
explosive growth of high data rate wireless applications, more
energy is consumed to guarantee the users' quality of service (QoS).
These facts create a compelling need for green wireless
communications. One way to achieve green wireless communications is
to improve the energy-efficiency of the current communications
networks \cite{chen2011fundamental}. Another way is to introduce
clean and environment-friendly renewable energy (such as solar power
and wind power) to wireless communications networks
\cite{barton2004energy}.


Introducing energy harvesting capabilities to wireless
communications is a promising approach to achieve green
communications, with its great potential to reduce the carbon
dioxide emission produced by conventional energy.  However, it poses
lots of new challenges on the design of resource allocation
strategies for the wireless communications networks. This is mainly
due to the highly time-varying availability of the renewable energy.
For instance, solar energy and wind energy may vary significantly
over time and locations depending on the weather and the climate
conditions. Thus, conventional transmit power constraints are not
suitable to model communications devices with renewable energy.
Instead, resource has to be allocated subject to energy harvesting
constraints. With energy harvesting constraints, in every time slot,
the transmitter is allowed to use at most the amount of harvested
and stored energy currently available. In other words, the
transmitter can not consume any energy harvested in future.

Throughput optimization for wireless communications systems with
such energy harvesting constraints has been extensively studied in
recent literatures. The capacity of AWGN channel with the energy
harvesting system setup was studied in \cite{ozel2010information}
and \cite{rajesh2011capacity}. Throughput maximization for a
single-user energy harvesting system with a deadline constraint in a
static channel was studied in \cite{yang2012optimal} and
\cite{tutuncuoglu2012optimum}. For single-user fading channel, the
optimal energy allocation scheme to maximize the throughput for a
slotted system over a finite horizon of time slots was obtained in
\cite{ho2012optimal} through dynamic programming. In
\cite{ozel2011transmission}, the authors derived continuous time
optimal policies to maximize the throughput of fading channels with
the energy harvesting constraints. Then, energy allocation
strategies to maximize the throughout of multiple access channels
and broadcast channels with energy harvesting constraints were
investigated in \cite{yang2012optimalMAC} and
\cite{yang2012broadcasting}. Throughput maximization for relay
channels with energy harvesting constraints was studied in
\cite{huang2011throughput}.

Another line of related research in wireless communications with
energy harvesting nodes focused on simultaneous wireless information
and power transfer \cite{varshney2008transporting,grover2010shannon,
zhang2013mimo, liu2013wireless,luo2013optimal, zhou2012wireless,
XKangFD2014}. The idea of simultaneous wireless information and
power transfer was proposed in \cite{varshney2008transporting}. 
In \cite{grover2010shannon}, the authors studied the tradeoff
between information rate and power transfer in a frequency selective
wireless system. Then, the tradeoff between energy and information
for a MIMO broadcast system was studied in \cite{zhang2013mimo}. In
\cite{liu2013wireless,luo2013optimal}, operation protocols and
switching schemes to minimize the outage probability for wireless
systems with simultaneous information and power transfer were
studied. In \cite{zhou2012wireless}, practical receiver designs for
implementing simultaneous information and power transfer were
investigated. In \cite{XKangFD2014}, a new protocol was proposed to
achieve simultaneous bi-directional wireless information and power
for a multi-user communication network.

In these aforementioned works, the communications devices are
powered only by the renewable energy. However, due to the highly
random availability of the renewable energy, communications devices
powered only by the renewable energy may not be able to guarantee a
required level of QoS. Since in many communication systems, such as
in a cellular communications, the QoS must be satisfied at least
with high probability, a hybrid energy supply system with both
renewable energy supply and conventional energy supply is preferred
in practice. This motivated us to consider a communications system
with both energy harvesters and conventional energy supply in this
paper. As an incentive to promote the use of  the renewable energy,
we assume that the renewable energy has a lower cost than the
conventional energy. Under this assumption, minimizing the energy
consumption is not equivalent to minimizing the energy cost. In this
paper,  unlike the conventional energy-efficient studies whose
objective is minimizing the energy consumption, our objective is to
minimize the total energy cost. The motivation for this is that,
from a user's perspective, minimizing the total energy cost is more
important and meaningful. \textcolor[rgb]{0.00,0.00,0.00}{It is
worth pointing out that hybrid energy supply model was also
considered in \cite{chia2013energy,ng2013energy,OOISIT,ng2013Dec}.
However, the focus of \cite{chia2013energy} was to develop the
energy cooperation scheme between two cellular base stations. The
target of \cite{ng2013energy} is to derive the resource allocation
scheme to maximize the weighted energy efficiency of data
transmission over a downlink orthogonal frequency division multiple
access (OFDMA) system. The objective of \cite{OOISIT} and
\cite{ng2013Dec} were to maximize the throughput and minimize the
energy consumption for a point-to-point channel, respectively.}


The contribution and the main results of this paper are summarized
as follows.
\begin{itemize}
  \item  We consider a point-to-point communications system with both renewable energy and conventional energy. To guarantee the QoS of the system, we propose an outage constraint, which requires a minimum number of slots to be reliably decoded.
  This constraint is useful if, for example, an outer code is used to recover all data bits. A mixed integer programming problem is then formulated to minimize the total energy cost of such a system for $N$ fading
  slots under both energy harvesting constraints and the
  proposed outage constraint.
  \item  We study the optimal power allocation
  strategy to minimize the total energy cost by assuming full knowledge of the channel and energy state
  information (CESI). By exploring the structured properties of the optimal solution, we propose two
  low complexity algorithms with worst case linear time complexity to yield the
  optimal power allocation for two extreme cases: when the number of outage slot is either $1$ or
  $N-1$.
  \item For the general case, we propose a lower bound based on linear programming relaxation. Besides, we propose two
  suboptimal algorithms referred to as \emph{linear programming based channel removal} (LPCR) and \emph{worst channel removal} (WCR), respectively.
  It is shown by simulation that the proposed suboptimal algorithms
exhibit only a small gap with respect to the lower bound. It is
proved that WCR is optimal when certain
  conditions are satisfied.
  \item We extend the proposed algorithms and the obtained
  results to the multi-cycle scenario where the outage constraint is imposed for each cycle separately. It is shown that the proposed
  algorithms can be easily extended to the multi-cycle scenario with
  few modifications.
   \item When CESI is not
  available, a new outage constraint is proposed. Closed-form solution is
  obtained for this case. It is shown that the optimal solution
  has a greedy feature. It always uses the low-cost energy first and
  uses the high-cost energy only when necessary (i.e., when the low-cost energy is not enough to guarantee the required QoS).
\end{itemize}

The rest of this paper is organized as follows. We describe the
system model in Section \ref{Sec-SysModel} and give out the problem
formulation in Section \ref{Sec-ProblemFormulation}. The proposed
algorithms and obtained theoretical results are then presented in
Section \ref{Sec-TheoResults}. In Section \ref{Sec-NoCSI}, we
investigate the power allocation strategy when future CSI and energy
harvesting information is not available. Then, in Section
\ref{Sec-NumericalResults}, numerical results are presented to
verify the proposed studies. Finally, Section \ref{Sec-Conclusions}
concludes the paper.

\section{System Model}\label{Sec-SysModel}
In this paper, we consider a point-to-point channel with one
transmitter (Tx) and one receiver (Rx). We assume that the
transmitter has access to two types of energy: conventional energy
and renewable energy. The conventional energy is obtained from
conventional power grid or batteries. The renewable energy is
obtained by energy harvesting devices, such as solar panel and wind
turbine. These two types of energy are provided to the transmitter
at different prices: $\alpha$ per unit for conventional energy, and
$\beta$ per unit for renewable energy. For exposure, we make the
following assumptions in this paper.
\begin{figure}[t]
        \centering
        \includegraphics*[width=8.5cm]{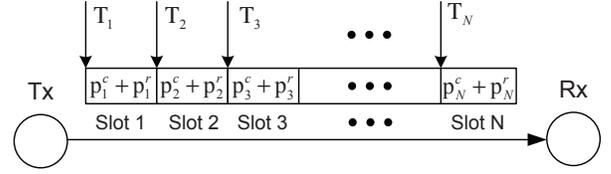}
        \caption{System Model}
        \label{Fig-SystemModel}
\end{figure}

\begin{itemize}
      \item $\alpha>\beta$. We make this assumption due to the following
two reasons: \textcolor[rgb]{0,0,0}{(i)} The renewable energy
greatly depends on the environment (such as the weather), and thus
is not as reliable as the conventional energy. Therefore, the
renewable energy should be priced lower to attract users.
\textcolor[rgb]{0,0,0}{(ii)} The renewable energy is clean and
environment-friendly. Thus, pricing the renewable energy at a lower
price provides an incentive for users to use green energy.
      \item  The transmission is slotted, and the Tx is equipped with an
      energy storage device. The energy harvested at the beginning of slot $i$ is
      denoted by $T_i$.  Thus, the harvested energy $T_i$  can be
      used is slot $i$ or
      stored for future use. The conventional energy and the renewable energy
      consumed at slot $i$ are denoted as $p_i^c$ and $p_i^r$,
      respectively.

      \item  The channel experiences block-fading and remains
constant during each transmission slot, but possibly changes from
one slot to another. The channel power gain is assumed to be a
random variable with a continuous probability density function (PDF)
$f(x)>0, \forall x>0$.  The channel power gain for slot $i$ is
denoted as $g_i$. The noise at the Rx is assumed to be a circular
symmetric complex Gaussian random variable with zero mean and
variance $N_0$ denoted by $\mathcal {CN}(0, N_0)$.
      \end{itemize}

\section{Problem Formulation}\label{Sec-ProblemFormulation}
In this paper, we assume that the whole transmission process
consists of $N$ time slots. Under the system model given in Section
\ref{Sec-SysModel},  the instantaneous transmission rate in slot $i$
can be written as $\ln(1+\frac{g_i\left(p_i^c +
p_i^r\right)}{N_0})$. Thus, if the target transmission rate of the
user is $R$, the minimum power required to support this rate is
\begin{align}p_i^{inv}=\frac{N_0~\left(e^R-~1\right)}{g_i}.\end{align}

We refer to $p_i^{inv}$ as \textbf{\emph{channel inversion power}}
for slot $i$. If $p_i^c + p_i^r< p_i^{inv}$, we say the user is in
outage in slot $i$. For convenience, we define an indicator function
for each slot, which is given as
\begin{align}\label{Eq-Chi}
\chi_i\left(p_i^c,p_i^r\right)=\left\{\begin{array}{ll}
                1, & ~\mbox{if}~\ln\left(1+\frac{g_i\left(p_i^c +
p_i^r\right)}{N_0}\right)<R, \\
                0, & ~\mbox{otherwise}.
              \end{array}
\right.
\end{align}

To guarantee the QoS, we assume that the fraction of outage should
be kept below a prescribed target $\epsilon$. Mathematically, this
can be written as
\begin{align}
\frac{1}{N}\sum_{i=1}^N\chi_i\left(p_i^c,p_i^r\right) \le \epsilon,
\end{align}
where $\chi_i\left(p_i^c,p_i^r\right)$ is given by \eqref{Eq-Chi}.
In this paper, we refer to this constraint as \textbf{\emph{outage
constraint}}. This outage constraint requires at least $\lceil
N(1-\epsilon) \rceil$ packets to be received without error over $N$
slots, which is useful for delay-sensitive data or when an outer
code is used that can correct any $\lfloor N(1-\epsilon) \rfloor$
packets in outage. Clearly, if $\epsilon=0$, no outage is allowed.

In this paper, we assume that the harvested energy can be stored for
future use. Thus, the \emph{energy harvesting constraints} can be
written as
\begin{align}
\sum_{i=1}^k p_i^r- \sum_{i=1}^k T_i \le 0, \forall k \in
\left\{1,2,\cdots,N\right\}.
\end{align}

In this paper, our objective is to minimize the total energy cost of
the $N$-slot transmission through proper energy allocation
strategies. Under the constraints described above, the problem can
be formulated as follows.
\begin{pro}\label{Problem-3}
\begin{align} \min_{p_i^c,~p_i^r}
&\sum_{i=1}^{N} \left(\alpha p_i^c +\beta
p_i^r\right), \label{Eq-P3-Obj}\\
\mbox{s.t.}~&~p_i^c\ge 0,~p_i^r\ge 0, \forall i
\in \left\{1,2,\cdots,N\right\},\label{Eq-P3-Con1}\\
&\sum_{i=1}^k p_i^r- \sum_{i=1}^k T_i \le 0, \forall k
\in \left\{1,2,\cdots,N\right\}, \label{Eq-P3-Con2}\\
&\frac{1}{N}\sum_{i=1}^N\chi_i\left(p_i^c,p_i^r\right) \le
\epsilon,\label{Eq-P3-Con3}
\end{align}
\end{pro}
where $\chi_i\left(p_i^c,p_i^r\right)$ is given by \eqref{Eq-Chi}.
For notation convenience, we use $\chi_i$ instead of
$\chi_i\left(p_i^c,p_i^r\right)$ in the rest of the paper. Problem
\ref{Problem-3} is a mixed integer optimization problem, which is
difficult to solve optimally \cite{CombinOp1998}.

For the problem considered here, we assume full CESI, i.e., the
channel power gains (i.e., $[g_1, g_2, \cdots, g_N]^T$) and the
energy harvesting state information (i.e., $[T_1, T_2, $ $\cdots,
T_N]^T$) are known at the Tx as a \emph{priori}. This assumption is
fairly strong and may not be practical. However, the solution
provides a lower bound on the energy cost and sheds insights on the
design of energy allocation strategies with partial CESI where not
all information is available in advance.

\section{Theoretical Results} \label{Sec-TheoResults}
We start by analyzing this problem to obtain structural properties
of the optimal solution, which is useful in developing good
sub-optimal algorithms later.
\subsection{Properties of Problem \ref{Problem-3}}
\begin{prop}\label{Lemma-1}
Denote the set of slots in which the user is in outage as $\mathcal
{S}$. Then, at the optimal solution of Problem \ref{Problem-3}, we
have $|\mathcal {S}^*|=\lfloor N\epsilon \rfloor $, where $|\cdot|$
denotes the cardinality of a set and $\lfloor x \rfloor$ denotes the
largest integer not greater than $x$.
\end{prop}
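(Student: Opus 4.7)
The plan is to argue by contradiction: I want to show that any feasible solution with strictly fewer than $\lfloor N\epsilon\rfloor$ outage slots cannot be optimal, because I can always create a new outage slot and strictly reduce the cost. Since $\chi_i\in\{0,1\}$, the outage constraint \eqref{Eq-P3-Con3} is equivalent to $|\mathcal{S}|\le\lfloor N\epsilon\rfloor$, so an optimum that uses only $|\mathcal{S}^*|<\lfloor N\epsilon\rfloor$ outage slots has at least one unit of ``outage budget'' unused, and I intend to spend it.

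First I would handle the trivial case $\lfloor N\epsilon\rfloor=0$: then the outage constraint itself forces $|\mathcal{S}^*|=0$ and the claim is immediate. For the main case $\lfloor N\epsilon\rfloor\ge 1$, I would suppose for contradiction that at some optimal $(p_i^{c*},p_i^{r*})$ we have $|\mathcal{S}^*|\le\lfloor N\epsilon\rfloor-1$. Because $|\mathcal{S}^*|<N$, at least one slot, say slot $j$, is \emph{not} in outage, which by the definition of $\chi_j$ forces
\begin{equation}
p_j^{c*}+p_j^{r*}\;\ge\;p_j^{inv}\;=\;\frac{N_0(e^R-1)}{g_j}\;>\;0.
\end{equation}
In particular, the cost contribution $\alpha p_j^{c*}+\beta p_j^{r*}$ of slot $j$ is strictly positive.

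Now I would construct a perturbed candidate $(\tilde p_i^c,\tilde p_i^r)$ by setting $\tilde p_j^c=\tilde p_j^r=0$ and leaving all other slots unchanged. Feasibility is straightforward to verify: the non-negativity constraints \eqref{Eq-P3-Con1} are preserved; the energy-harvesting constraints \eqref{Eq-P3-Con2} can only become slacker when $p_j^r$ is decreased to zero; and slot $j$ now contributes $\chi_j=1$ to the outage count, so the new outage count is $|\mathcal{S}^*|+1\le\lfloor N\epsilon\rfloor$, still satisfying \eqref{Eq-P3-Con3}. The objective value strictly decreases by $\alpha p_j^{c*}+\beta p_j^{r*}>0$, contradicting optimality. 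Combined with the upper bound $|\mathcal{S}^*|\le\lfloor N\epsilon\rfloor$ from feasibility, this yields $|\mathcal{S}^*|=\lfloor N\epsilon\rfloor$.

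I do not expect any substantial obstacle; the argument is a one-line exchange argument. The only point that deserves care is confirming that turning off both energy sources in slot $j$ cannot violate any of the causal energy-harvesting inequalities, but this is immediate since lowering $p_j^r$ only relaxes \eqref{Eq-P3-Con2}. Thus the proof is essentially a clean contradiction based on the binary nature of $\chi_i$ and the positivity of the channel-inversion power.
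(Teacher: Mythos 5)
Your argument is correct and is essentially the same exchange-by-contradiction the paper uses: assume fewer than $\lfloor N\epsilon\rfloor$ outage slots, zero out the power in one non-outage slot, check that all constraints (in particular the causal energy-harvesting inequalities) remain satisfied, and conclude the cost strictly drops. Your write-up is actually more careful than the paper's — it handles the $\lfloor N\epsilon\rfloor=0$ case and makes explicit that the saved cost is strictly positive because $p_j^{inv}>0$ — but the underlying idea is identical.
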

\begin{proof}
First, any feasible solution of Problem \ref{Problem-3} must satisfy
the constraint \eqref{Eq-P3-Con3}. Thus, we have $|\mathcal
{S}^*|\le\lfloor N\epsilon \rfloor$. Now, suppose $|\mathcal
{S}^*|<\lfloor N\epsilon \rfloor$. Then, we can always drop more
slots such that \eqref{Eq-P3-Con3} holds with equality. Thus, the
energy cost of these slots becomes zero. Obviously, by doing this,
the value of \eqref{Eq-P3-Obj} is reduced. This contradicts with our
presumption that $|\mathcal {S}^*|<\lfloor N\epsilon \rfloor$. Thus,
$|\mathcal {S}^*|$ must be equal to $\lfloor N\epsilon \rfloor $.
\end{proof}

Proposition \ref{Lemma-1} indicates that the optimal allocation
strategy is to drop as many slots as allowed by the outage
constraint. In those dropped slots, the user should shut down its
transmission, and thus consumes no energy.

We next consider the case where the set of outage slots is fixed,
and determine the optimal power allocation policy under this
condition. We first state a Lemma that may be of independent
technical interest.
\begin{lem}\label{Lemma-0}
An optimal policy for the following linear program
\begin{align}
\min_{x_i}~
&\alpha\sum_{i=1}^N c_i-(\alpha-\beta) \sum_{i=1}^N x_i,\\
\mbox{s.t.}~~ &~0\le x_i \le c_i, \forall i \in \{1,2, \ldots, N\},
\\&\sum_{i=1}^k x_i- \sum_{i=1}^k T_i \le 0, \forall
k \in \left\{1,2,\cdots,N\right\}.
\end{align}
is given by $x_i^* = \min\{c_i, \sum_{j=1}^{i}T_j - \sum_{j=1}^{i-1}
x^*_{j}\}$ for $i \in \{1, 2, \ldots, N\}$.
\end{lem}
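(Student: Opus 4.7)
The plan is to first simplify the objective. Since $\alpha\sum_{i=1}^N c_i$ does not depend on the decision variables and $\alpha-\beta>0$ by the paper's standing assumption $\alpha>\beta$, minimizing the given LP is equivalent to maximizing $\sum_{i=1}^N x_i$ subject to the box constraints $0\le x_i\le c_i$ and the cumulative budget constraints $\sum_{j=1}^k x_j\le \sum_{j=1}^k T_j$. The claim then reduces to showing that the greedy rule $x_k^{*}=\min\{c_k,\,\sum_{j=1}^k T_j-\sum_{j=1}^{k-1}x_j^{*}\}$ maximizes the total $\sum x_i$ among all feasible sequences.

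The core step is a prefix-sum dominance, proved by induction on $k$: for every feasible $\{x_i\}$ and every $k\in\{1,\ldots,N\}$,
\[
\sum_{j=1}^k x_j \;\le\; \sum_{j=1}^k x_j^{*}.
\]
The inductive step splits on which of the two arguments of the $\min$ achieves $x_k^{*}$. In the first branch, $x_k^{*}=c_k$, so $x_k\le c_k=x_k^{*}$ and the bound follows by adding the inductive hypothesis. In the second branch, the greedy rule saturates the $k$-th cumulative budget, i.e.\ $\sum_{j=1}^k x_j^{*}=\sum_{j=1}^k T_j$, and the feasibility of $\{x_j\}$ then gives $\sum_{j=1}^k x_j\le \sum_{j=1}^k T_j=\sum_{j=1}^k x_j^{*}$. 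Setting $k=N$ delivers the desired optimality.

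A small preliminary is to verify that $\{x_i^{*}\}$ is itself feasible: the clamp by $c_k$ enforces $x_k^{*}\le c_k$; non-negativity follows inductively from $\sum_{j=1}^{k-1}x_j^{*}\le\sum_{j=1}^{k-1}T_j\le\sum_{j=1}^k T_j$; and the cumulative budget constraint holds at every step by construction, because the second argument of the $\min$ is chosen precisely to preserve it. The main conceptual obstacle is picking the right dichotomy in the inductive step — the greedy policy is throttled by two distinct mechanisms (the instantaneous cap $c_k$ and the cumulative harvesting budget), and each case of the $\min$ has to be paired with the corresponding constraint of the competing solution $\{x_j\}$ to close the inequality. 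Once that case split is made explicit, the remainder of the argument is routine bookkeeping.
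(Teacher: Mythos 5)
Your proof is correct, but it takes a genuinely different route from the paper's. The paper proves Lemma 1 by exhibiting explicit KKT multipliers: it sets $\gamma_i=0$, identifies the largest index $K$ at which the greedy policy saturates the cumulative harvesting budget, puts $\mu_K=\alpha-\beta$ and $\lambda_i=\alpha-\beta$ for $i>K$ (zero elsewhere), and invokes sufficiency of the KKT conditions for linear programs. You instead observe that the problem is equivalent to maximizing $\sum_i x_i$ and establish the prefix-sum dominance $\sum_{j\le k}x_j\le\sum_{j\le k}x_j^*$ for every feasible competitor by induction, with the case split on which argument of the $\min$ is active pairing exactly with the box constraint $x_k\le c_k$ or the cumulative constraint $\sum_{j\le k}x_j\le\sum_{j\le k}T_j$ of the competitor. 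Your argument is more elementary (no duality machinery), is fully self-contained, and in fact proves something slightly stronger --- the greedy policy maximizes \emph{every} partial sum, not just the total --- which is the kind of structural fact that could be reused in the multi-cycle analysis. The paper's KKT route buys the dual certificate itself, which carries sensitivity information and would adapt more directly if the objective weights were made slot-dependent. One minor point to make explicit in your write-up: when $\alpha=\beta$ the objective is constant and the claim is vacuous, so restricting to $\alpha>\beta$ for the equivalence with maximization loses nothing; otherwise the argument is complete.
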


Proof of this lemma follows from observing that the policy satisfies
the Karush-Kuhn-Tucker (KKT) conditions~\cite{Convexoptimization}
for the linear program. Since the KKT conditions are sufficient for
optimality of linear programs~\cite{Convexoptimization}, this policy
is optimal. The proof is given in the Part A of the Appendix.

\begin{prop}\label{Lemma-2}

For any given set $\mathcal{S}$, the power allocation strategy
$\boldsymbol{p}_k^*=[p_k^r,~p_k^c]^T$ given below is optimal.
\begin{align}\label{Eq-Greedy}
\boldsymbol{p}_k^*=\left\{\begin{array}{ll}
                          \hat{\boldsymbol{p}}_k^*,  &~ \forall k \in
                          \mathcal{S},\\
                          \tilde{\boldsymbol{p}}_k^*,  &~ \forall k \in \mathcal {S}^c,
                        \end{array}
\right.
\end{align}
where $\hat{\boldsymbol{p}}_k^*=[0,~0]^T$,
$\tilde{\boldsymbol{p}}_k^*=\left[p_k^{r*}, p_i^{inv}-
p_k^{r*}\right]^T$ with $p_k^{r*}=\min\left\{p_i^{inv},\sum_{i=1}^k
T_i-\sum_{i=1}^{k-1}p_i^r\right\}$, and $\mathcal {S}^c$ denotes the
complement of $\mathcal {S}$.
\end{prop}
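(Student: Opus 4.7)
The plan is to reduce Proposition \ref{Lemma-2} to Lemma \ref{Lemma-0} by a short normalization argument that pins down the support and the rate on each slot, after which the renewable/conventional split is determined by a direct application of Lemma \ref{Lemma-0}.

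First I would establish a without-loss-of-generality normalization of any optimal $(p_k^c, p_k^r)$. On every outage slot $k \in \mathcal{S}$, I claim $p_k^c = p_k^r = 0$. Conventional energy is clearly wasteful since $\chi_k = 1$ is fixed by the choice of $\mathcal{S}$, so $p_k^c = 0$. For the renewable component, replacing $p_k^r > 0$ by $0$ relaxes every cumulative harvesting constraint $\sum_{i=1}^{j} p_i^r \leq \sum_{i=1}^{j} T_i$ for $j \geq k$, leaves every $\chi_i$ unchanged, and strictly decreases the objective by $\beta p_k^r$; hence any optimum can be modified to satisfy $p_k^r = 0$ on $\mathcal{S}$.

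On every non-outage slot $k \in \mathcal{S}^c$, I would similarly argue that at the optimum $p_k^c + p_k^r = p_k^{inv}$. Any strict excess above $p_k^{inv}$ can be shaved off (first from $p_k^c$, then from $p_k^r$): the constraint $\chi_k = 0$ continues to hold, no other constraint is tightened, and the objective strictly decreases. Combined with non-negativity, this gives $0 \leq p_k^r \leq p_k^{inv}$ and $p_k^c = p_k^{inv} - p_k^r \geq 0$ on non-outage slots. With this normalization the objective collapses to
\begin{align}
\sum_{k=1}^N \bigl(\alpha p_k^c + \beta p_k^r\bigr) = \alpha \sum_{k \in \mathcal{S}^c} p_k^{inv} - (\alpha - \beta) \sum_{k=1}^N p_k^r.
\end{align}
Setting $x_k = p_k^r$, $c_k = p_k^{inv}$ for $k \in \mathcal{S}^c$, and $c_k = 0$ for $k \in \mathcal{S}$, the resulting problem is exactly the linear program of Lemma \ref{Lemma-0}, whose optimal policy is the greedy rule $x_k^* = \min\{c_k, \sum_{i=1}^{k} T_i - \sum_{i=1}^{k-1} x_i^*\}$. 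For $k \in \mathcal{S}$ the rule yields $x_k^* = 0$ (so $\hat{\boldsymbol{p}}_k^* = [0,0]^T$), and for $k \in \mathcal{S}^c$ it reproduces the $p_k^{r*}$ stated in the proposition with $p_k^c = p_k^{inv} - p_k^{r*}$.

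The main obstacle I expect is the outage-slot normalization: justifying $p_k^r = 0$ on $k \in \mathcal{S}$ needs a global rather than purely local argument because the cumulative harvesting constraints couple all slots across time. However, the perturbation that sets $p_k^r$ to zero on an outage slot only relaxes those constraints (never tightens them), so the modification is always feasible and monotonically improves the cost. Once this reduction is in hand, every remaining step is a routine substitution into Lemma \ref{Lemma-0}.
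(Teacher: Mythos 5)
Your proposal is correct and follows essentially the same route as the paper: fix $\mathcal{S}$, argue that outage slots get zero power and kept slots satisfy $p_k^c+p_k^r=p_k^{inv}$, rewrite the objective as $\alpha\sum_{k\in\mathcal{S}^c}p_k^{inv}-(\alpha-\beta)\sum_k p_k^r$, and invoke Lemma \ref{Lemma-0} with $c_k=p_k^{inv}$ on $\mathcal{S}^c$ and $c_k=0$ on $\mathcal{S}$. The extra care you take in justifying the outage-slot normalization against the cumulative harvesting constraints is a welcome elaboration of a step the paper treats as obvious, but it is not a different argument.
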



\begin{proof}
It is observed that if $\mathcal {S}$ is given, Problem
\ref{Problem-3} can be converted to
\begin{pro}\label{Problem-Complement}
\begin{align}
\min_{p_i^c,~p_i^r}  &\left(\alpha \sum_{ i \in \mathcal {S}^c}
p_i^c +\beta \sum_{ i
\in \mathcal {S}^c} p_i^r\right),\\
\mbox{s.t.}~&~p_i^c\ge 0,~p_i^r\ge 0,  ~i
\in \mathcal {S}^c,\\
&\sum_{i=1}^k p_i^r- \sum_{i=1}^k T_i \le 0, \forall k
\in \left\{1,2,\cdots,N\right\}, \\
&\ln(1+\frac{g_i\left(p_i^c + p_i^r\right)}{N_0})\ge R, \forall i
\in \mathcal {S}^c, \label{EQ-PComplement-Con3}
\end{align}
\end{pro}
where $\mathcal {S}^c$ denotes the complement of $\mathcal {S}$.

It is not difficult to observe that \eqref{EQ-PComplement-Con3} is
equivalent to $ p_i^c+p_i^r\ge p_i^{inv},\forall i \in \mathcal
{S}^c. $ Obviously, the objective function is minimized when it
holds with equality for all $i \in \mathcal {S}^c$, i.e.,
$p_i^c+p_i^r=p_i^{inv},\forall i \in \mathcal {S}^c.$ Furthermore,
it is also easy to see that our optimization problem is equivalent
to setting $p^{inv}_i = 0$ for all $i \in \mathcal{S}$. Based on
these observations, Problem \ref{Problem-Complement} can be
converted to
\textcolor[rgb]{0.00,0.00,0.00}{\begin{pro}\label{Proof-Problem}
\begin{align}
\min_{p_i^r}~
&\alpha\sum_{i \in \mathcal {S}^c} p_i^{inv}-(\alpha-\beta) \sum_{i \in \mathcal {S}^c} p_i^r,\\
\mbox{s.t.}~~ &~0\le p_i^r \le p_i^{inv}, \forall i \in \mathcal
{S}^c,
\\&\sum_{i=1}^k p_i^r- \sum_{i=1}^k T_i \le 0, \forall
k \in \left\{1,2,\cdots,N\right\}.
\end{align}
\end{pro}}
We note that Problem 3 has the same structure as the linear program
in Lemma~\ref{Lemma-0}. Applying Lemma~\ref{Lemma-0} to with $c_i =
p_{i}^{inv}$ and $x_i=p_i^r$ then concludes our proof of
Proposition~\ref{Lemma-2}.
\end{proof}

Proposition \ref{Lemma-2} indicates that the power allocation is
zero for both conventional and harvested energy in dropped slots.
This is clearly optimal in terms of energy saving. It is also
observed that for the remaining slots, the harvested energy should
be used first. If the harvested energy is not enough to support the
target rate during these slots, conventional energy should be used
as a compensation. This is similar to the greedy use of harvested
energy whenever possible, and thus highlights the fundamental
difference of prioritizing the use of (cheap) harvested energy over
(expensive) conventional energy. Thus, for convenience, we refer to
the power allocation given in \eqref{Eq-Greedy} as
\textbf{\emph{Greedy Power Allocation}}.

\subsection{Optimal Power Allocation Algorithms}
From the results obtained in Proposition \ref{Lemma-1} and
Proposition \ref{Lemma-2}, it is observed that Problem
\ref{Problem-3} in general can be solved in two steps: (1) Find the
set of the time slots that should be dropped, i.e., $\mathcal {S}$.
(2) Apply the greedy power allocation given in \eqref{Eq-Greedy} for
the time slots in $\mathcal {S}^c$.  The difficulty of Problem
\ref{Problem-3} lies primarily in the first step, i.e., find the
optimal $\mathcal {S}^*$ and its complement $(\mathcal {S}^*)^c$. We
start our analysis from two extreme cases ($\lfloor N\epsilon
\rfloor=1$ and $\lfloor N\epsilon \rfloor=N-1$) and then extend the
results to the general case ($\lfloor N\epsilon \rfloor=M$).

\emph{1)} $\lfloor N\epsilon \rfloor=1$: This case is to drop one
slot.

\begin{thm}\label{Theorem-5}
When $\lfloor N\epsilon \rfloor=1$, for any slot
$i\in\{1,2,\cdots,N\}$, if there is a slot before slot $i$ requiring
more channel inversion power than slot $i$, then slot $i$ should be
kept, i.e, $i\in (S^*)^{c}$.
\end{thm}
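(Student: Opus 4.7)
The plan is to argue by contrapositive via an exchange argument. Suppose for contradiction that slot $i$ lies in $\mathcal{S}^*$ while there exists $j<i$ with $p_j^{inv}>p_i^{inv}$. Since $\lfloor N\epsilon\rfloor=1$, Proposition \ref{Lemma-1} forces $|\mathcal{S}^*|=1$, so $\mathcal{S}^*=\{i\}$. I will construct the alternative $\mathcal{S}'=\{j\}$, evaluate the resulting optimal cost via Proposition \ref{Lemma-2}, and show it is strictly smaller, contradicting optimality.

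By Proposition \ref{Lemma-2} (equivalently, Problem 3 in the proof of that proposition), once the outage set is fixed to some $\mathcal{S}$, the minimum cost equals $\alpha D_N(\mathcal{S}) - (\alpha-\beta)\,U_N(\mathcal{S})$, where $D_k(\mathcal{S})=\sum_{m\le k,\,m\notin\mathcal{S}} p_m^{inv}$ is the cumulative channel-inversion demand and $U_N(\mathcal{S})$ is the total harvested energy used by the greedy policy of Lemma \ref{Lemma-0}. The first step is to observe that the greedy allocation admits the closed-form total
\begin{equation*}
U_N(\mathcal{S})=D_N(\mathcal{S})+\min_{0\le k\le N}\bigl(R_k-D_k(\mathcal{S})\bigr),\qquad R_k=\sum_{m=1}^{k}T_m,
\end{equation*}
which follows directly from unrolling the greedy recursion $x_m=\min\{c_m,\,R_m-\sum_{\ell<m}x_\ell\}$ of Lemma \ref{Lemma-0} applied with $c_m=p_m^{inv}\mathbf{1}_{m\notin\mathcal{S}}$.

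Next, I will compare the cumulative demands of $\mathcal{S}=\{i\}$ and $\mathcal{S}'=\{j\}$. A direct inspection of the three regimes $k<j$, $j\le k<i$, $k\ge i$ shows $D_k(\mathcal{S}')\le D_k(\mathcal{S})$ for every $k$, with $D_N(\mathcal{S})-D_N(\mathcal{S}')=p_j^{inv}-p_i^{inv}>0$. Hence $R_k-D_k(\mathcal{S}')\ge R_k-D_k(\mathcal{S})$ for every $k$, which gives $\min_k(R_k-D_k(\mathcal{S}'))\ge \min_k(R_k-D_k(\mathcal{S}))$. Plugging into the closed form above yields
\begin{equation*}
U_N(\mathcal{S})-U_N(\mathcal{S}')\le D_N(\mathcal{S})-D_N(\mathcal{S}').
\end{equation*}

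Finally, since $\alpha>\beta>0$, the cost gap becomes
\begin{equation*}
\text{Cost}(\mathcal{S})-\text{Cost}(\mathcal{S}')=\alpha\bigl(D_N(\mathcal{S})-D_N(\mathcal{S}')\bigr)-(\alpha-\beta)\bigl(U_N(\mathcal{S})-U_N(\mathcal{S}')\bigr)\ge \beta\bigl(p_j^{inv}-p_i^{inv}\bigr)>0,
\end{equation*}
contradicting the optimality of $\mathcal{S}^*=\{i\}$ and completing the argument. The main obstacle is the middle step: one must control how the renewable budget consumed by the greedy policy changes under the swap, because the causality constraint \eqref{Eq-P3-Con2} can prevent simple bookkeeping. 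The closed-form expression for $U_N$ in terms of the running minimum of $R_k-D_k$ is the key enabler that converts the monotone relation $D_k(\mathcal{S}')\le D_k(\mathcal{S})$ into the desired inequality on renewable usage.
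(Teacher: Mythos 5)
Your proof is correct, and it takes a genuinely different route from the paper's. The paper argues by a local exchange: it compares the cost incurred at slot $k$ (when slot $i$ is dropped) against the cost incurred at slot $i$ (when slot $k$ is dropped), reusing the renewable budget $E_k-E_{k+1}$ freed by shutting down slot $k$, and then asserts that all other slots can keep their original allocations. That assertion is the weak point of the paper's argument --- it is plausible but the feasibility bookkeeping under the causality constraints is left informal. You instead work globally: you unroll the greedy recursion of Lemma \ref{Lemma-0} into the closed form $U_N(\mathcal{S})=D_N(\mathcal{S})+\min_{0\le k\le N}\bigl(R_k-D_k(\mathcal{S})\bigr)$ (which is easily verified by induction from $X_m=\min\{X_{m-1}+c_m,R_m\}$), observe that replacing $\{i\}$ by $\{j\}$ with $j<i$ and $p_j^{inv}>p_i^{inv}$ decreases every cumulative demand $D_k$, and convert that pointwise monotonicity into the bound $U_N(\mathcal{S})-U_N(\mathcal{S}')\le D_N(\mathcal{S})-D_N(\mathcal{S}')$, whence the cost gap is at least $\beta\,(p_j^{inv}-p_i^{inv})$. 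This is cleaner and more rigorous than the paper's version, correctly handles the causality constraint through the running minimum, and compares the true optimal costs of the two outage sets rather than a hand-picked feasible reallocation. The only caveat is that strictness of your final inequality needs $\beta>0$; if $\beta=0$ you get only a weak inequality, which still shows some optimal solution keeps slot $i$ --- exactly the (weak) conclusion the paper's own proof establishes.
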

\begin{proof}
Suppose that there is a slot $k$, where $k<i$, and $p_k^{inv} >
p_i^{inv}$. Now, we consider the following two scenarios:

Scenario 1: Slot $i$ is dropped. For the convenience of exposition,
we denote the total energy available at slots $k$ and $k+1$ are
$E_k$ and $E_{k+1}$, respectively. Then, the harvested energy
consumed during slot $k$ is $p_k^r=E_k-E_{k+1}$. Then, the energy
cost generated by this slot is $\alpha p_k^c+\beta p_k^r$, which is
equivalent to $\alpha p_k^{inv}-(\alpha-\beta) (E_k-E_{k+1})$.

Scenario 2: Slot $k$ is dropped. Thus, no energy is consumed in slot
$k$ and $p_k^{inv}$ in case 1 can be saved for future use.  If
$E_k-E_{k+1}\ge p_i^{inv}$, the energy cost generated by slot $i$ is
$\beta p_i^{inv}$, which is less than $\alpha p_k^c+\beta p_k^r$. If
$E_k-E_{k+1}<p_i^{inv}$, the energy cost generated by slot $i$ is
$\alpha p_i^{inv}-(\alpha-\beta) p_i^r$. In this case, we let
$p_i^r=E_k-E_{k+1}$. It follows that $\alpha
p_i^{inv}-(\alpha-\beta) (E_k-E_{k+1})<\alpha
p_k^{inv}-(\alpha-\beta) (E_k-E_{k+1})$.

It is observed that the energy cost incurred by slot $i$ is always
no more than that incurred by slot $k$. Besides, in other time
slots, for scenario 2, we can always adopt the power allocation used
in scenario 1. Thus, keeping slot $i$ will never result in a larger
total energy cost than keeping slot $k$. Theorem \ref{Theorem-5} is
thus proved.
\end{proof}


Based on the result given in Theorem \ref{Theorem-5}, we are able to
develop the following algorithm to obtain the optimal power
allocation scheme for Problem \ref{Problem-3}.
\begin{algorithm}[h!]
\caption{Optimal power allocation for Problem \ref{Problem-3} when
$\lfloor N\epsilon \rfloor=1$ }\label{alg:1outofN}
\begin{algorithmic}[1]
\STATE Calculate the channel inversion power required for each time
slot, i.e., $p_i^{inv}= \frac{N_0\left(e^{R}-1\right)}{g_i},\forall
i.$

\STATE Initialize $p_{idx}=N$.

\STATE Initialize a set $\mathcal {A}=\left\{1,2,\cdots,N\right\}$,
and  a  candidate set $\mathcal {B}=\emptyset$.

\WHILE {$p_{idx}>1$}

\STATE Find the time slot that requires the largest channel
inversion power in $\mathcal {A}$ and set the value of its index to
$p_{idx}$.

\STATE Put this time slot into the set $\mathcal {B}$.

\STATE \textcolor[rgb]{0.00,0.00,0.00}{Remove the slot $p_{idx}$ and
all the subsequent slots in $\mathcal {A}$.}

\ENDWHILE

\STATE  For any slot $i\in \mathcal {B}$, except the one with the
largest $p_i^{inv}$, slot $i$ can be removed from $\mathcal {B}$ if
the condition $p_i^{inv}<T_i$ is satisfied.

\STATE \textcolor[rgb]{0.00,0.00,0.00}{Perform exhaustive search
over the remaining candidates in $\mathcal {B}$.}
\end{algorithmic}
\end{algorithm}

\begin{figure}[t]
        \centering
        \includegraphics*[width=8.5cm]{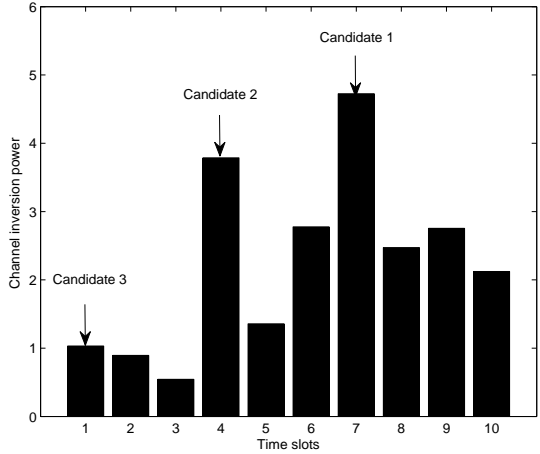}
        \caption{Illustration for Algorithm \ref{alg:1outofN}}
        \label{Fig-illufig1}
\end{figure}




%

We give an example to illustrate Algorithm \ref{alg:1outofN} in Fig.
\ref{Fig-illufig1} with $N=10$. We first put all the  $10$ time
slots into a set $\mathcal {A}$. It is observed that slot $7$
requires most channel inversion power in $\mathcal {A}$. Thus, we
put slot $7$ into the candidate set $\mathcal {B}$ and remove slots
$7$ to $10$ from $\mathcal {A}$. It is observed that slot $4$ now
requires the largest channel inversion power in $\mathcal {A}$.
Therefore, we put slot $4$ into the candidates set $\mathcal {B}$,
and remove slots $4$ to $6$ from $\mathcal {A}$. Then, slot $1$ now
requires most channel inversion power in $\mathcal {A}$. Thus, we
put slot $1$ into the candidates set $\mathcal {B}$ and remove slots
$1$ to $3$ from $\mathcal {A}$. Now, in the candidates set $\mathcal
{B}$, we have
slots $1$, $4$ and $7$. 
It is clear that only the slot with $T_i< p_i^{inv}$ may incur an
energy cost higher than slot $7$ since it has to use some
conventional energy. Thus, for any slot $i$ (except slot $7$) in
$\mathcal {B}$ with $T_i\ge p_i^{inv}$, it can be removed from set
$\mathcal {B}$ since it will not incur an energy cost higher than
slot $7$. In general, after these procedures, the number of the
candidates left in $\mathcal {B}$ is small, and we can easily search
for the optimal solution.

\emph{2)} $\lfloor N\epsilon \rfloor=N-1$: This is equivalent to
keeping one slot and dropping $N-1$ slots. By applying the results
given in Theorem \ref{Theorem-5}, we are able to develop the
following algorithm.

\begin{algorithm}[h]
\caption{Optimal power allocation for Problem \ref{Problem-3} when
$\lfloor N\epsilon \rfloor=N-1$}\label{alg:N1outofN}
\begin{algorithmic}[1]
\STATE Calculate the channel inversion power required for each time
slot, i.e., $p_i^{inv}= \frac{N_0\left(e^{R}-1\right)}{g_i},\forall
i.$

\STATE Initialize $p_{idx}=1$.

\STATE Initialize a set $\mathcal {A}=\left\{1,2,\cdots,N\right\}$,
and  a  candidate set $\mathcal {B}=\emptyset$.

\WHILE {$p_{idx}<N$}

\STATE Find the time slot that requires the smallest channel
inversion power in $\mathcal {A}$ and set the value of its index to
$p_{idx}$.

\STATE Put this time slot into the set $\mathcal {B}$.

\STATE \textcolor[rgb]{0.00,0.00,0.00}{Remove all the time slots
before $p_{idx}$ (including slot $p_{idx}$) in $\mathcal {A}$.}

\ENDWHILE

\STATE For any slot $k\in \mathcal {B}$,  the slots after slot $k$
can be removed from  set $\mathcal {B}$ if $\sum_{i=1}^k
T_i>p_k^{inv}$.

\STATE \textcolor[rgb]{0.00,0.00,0.00}{Perform exhaustive search
over the remaining candidates in $\mathcal {B}$.}
\end{algorithmic}
\end{algorithm}


We give an example to illustrate Algorithm \ref{alg:N1outofN} in
Fig. \ref{Fig-illufig2}. We first put the all $10$ time slots into a
set $\mathcal {A}$. It is observed that slot $3$ requires smallest
channel inversion power in $\mathcal {A}$. Thus, we put slot $3$
into the candidates set $\mathcal {B}$ and remove slots $1$ to $3$
from $\mathcal {A}$. It is observed that slot $5$ now requires
smallest channel inversion power in $\mathcal {A}$. Therefore, we
put slot $5$ into the candidates set $\mathcal {B}$ and remove slots
$4$ to $5$ from $\mathcal {A}$. Then, slot $10$ now requires
smallest channel inversion power in $\mathcal {A}$. Thus, we put
slot $10$ into the candidates set $\mathcal {B}$ and remove slots
$6$ to $10$ from $\mathcal {A}$. Now, in the candidates set
$\mathcal {B}$, we have
slots $3$, $5$ and $10$. 
Then, it is clear that if $\sum_{i=1}^k T_i>p_k^{inv}$, $\forall k
\in \mathcal {B}$, the slots after slot $k$ can be removed from set
$\mathcal {B}$ due to the fact that slot $k$ will not incur an
energy cost higher than those slots after it. In general, after
these procedures, the number of the candidates left in $\mathcal
{B}$ is quite small, and we can easily search for the optimal
solution.

\begin{figure}[t]
        \centering
        \includegraphics*[width=8.5cm]{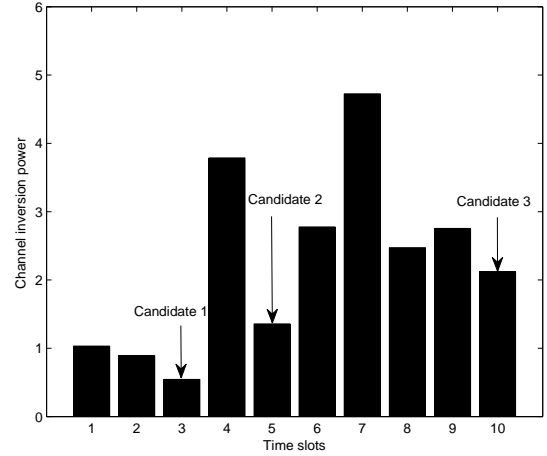}
        \caption{Illustration for Algorithm \ref{alg:N1outofN}}
        \label{Fig-illufig2}
\end{figure}


\emph{3)} $\lfloor N\epsilon \rfloor=M$: This case is to drop $M$
slots.
\begin{thm}\label{Theorem-4}
When $\lfloor N\epsilon \rfloor=M$, for any slot
$i\in\{1,2,\cdots,N\}$, if there are $M$ slots before slot $i$
requiring more channel inversion power than that of slot $i$, then
slot $i$ should not be dropped, i.e, $i\in S^{c}$.
\end{thm}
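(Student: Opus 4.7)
The plan is to generalize the exchange argument of Theorem~\ref{Theorem-5} (the $M=1$ case) by combining it with Proposition~\ref{Lemma-1} and a simple pigeonhole step. I would argue by contradiction: assume an optimal outage set $S^*$ with $i \in S^*$, while there exist $M$ indices $k_1 < k_2 < \cdots < k_M < i$ all satisfying $p_{k_j}^{inv} > p_i^{inv}$. Proposition~\ref{Lemma-1} gives $|S^*| = M$; since $i$ itself uses up one outage slot, at most $M-1$ of the $k_j$ can also lie in $S^*$, so at least one of them, call it $k$, must be a \emph{kept} slot. This collapses the situation to exactly the two-slot configuration of Theorem~\ref{Theorem-5}: a kept slot $k < i$ with $p_k^{inv} > p_i^{inv}$ and a dropped slot $i$.

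Next, I would construct the swapped outage set $S' = (S^* \setminus \{i\}) \cup \{k\}$ and, via Proposition~\ref{Lemma-2}, its associated greedy allocation. On every slot outside $\{k, i\}$ I would keep the same $(p_j^c, p_j^r)$ as in the original optimum. Under $S^*$, slot $k$ costs $\alpha p_k^{inv} - (\alpha - \beta) p_k^r$, where $p_k^r$ is the harvested energy expended there; under $S'$ slot $k$ costs $0$, and the freed $p_k^r$ units of renewable energy become available at slot $i$. A two-case analysis as in the proof of Theorem~\ref{Theorem-5}---$p_k^r \ge p_i^{inv}$ versus $p_k^r < p_i^{inv}$---then shows that the cost at slot $i$ under $S'$ is at most $\alpha p_k^{inv} - (\alpha - \beta) p_k^r$, and is strictly smaller because $p_i^{inv} < p_k^{inv}$. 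This contradicts optimality of $S^*$ and forces $i \in (S^*)^c$.

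The main obstacle is verifying feasibility on the intermediate slots $k < j < i$ after the swap. Specifically, I must check that the cumulative renewable constraint $\sum_{j=1}^K p_j^r \le \sum_{j=1}^K T_j$ still holds for every $K$ once the allocation on slots other than $\{k, i\}$ is carried over unchanged. For $K \in [k, i-1]$ the cumulative use strictly decreases by $p_k^r$, so feasibility is preserved with slack. For $K \ge i$, the replacement uses $p_i^r \le p_k^r$ harvested energy at slot $i$, so the cumulative is still no larger than in the original feasible solution. Once this energy bookkeeping is in place, the cost comparison is a direct adaptation of the two-scenario calculation already presented for Theorem~\ref{Theorem-5}.
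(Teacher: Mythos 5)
Your proposal is correct and follows essentially the same route as the paper: the pigeonhole observation that at most $M-1$ of the $M$ larger-power slots can be in the outage set (so one must be kept), followed by the exchange argument of Theorem~\ref{Theorem-5} to swap the kept slot $k$ with the dropped slot $i$. Your explicit verification of the cumulative energy-harvesting constraints after the swap is a detail the paper leaves implicit, but it does not change the structure of the argument.
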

\begin{proof}
Consider the scenario that there are $M$ slots ahead of slot $i$
requiring more channel inversion power than that of slot $i$.  Now,
we suppose that it is optimal to drop slot $i$. Under these
assumptions, there are two possible cases:
\begin{itemize}
  \item \emph{Case 1: All of those $M$ slots are dropped.} This
  implies a total $M+1$ slots are dropped, which contradicts with
  the fact that $\lfloor N\epsilon \rfloor=M$. Thus, this case
  cannot happen.
  \item \emph{Case 2: $M-1$ or less lots of those $M$ slots are dropped.}
  For this case, there must exist at least one slot $j$ requiring more
  channel inversion power than slot $i$ is not dropped. Then, according to Theorem \ref{Theorem-5}, by
  dropping slot $j$ and keeping slot $i$ instead, we can achieve
  lower energy cost. This contradicts with our presumption
  that it is optimal to drop slot $i$.
\end{itemize}
Combining the above results, it is clear that our presumption does
not hold. By contradiction, slot $i$ should be kept for the scenario
considered. Theorem \ref{Theorem-4} is thus proved.
%
\end{proof}

By applying Theorem \ref{Theorem-4}, we can reduce the number of
channels under consideration and search over the remaining
candidates set.

The proposed optimal power allocation algorithms in this section can
greatly reduce the number of candidates when finding the slots to be
dropped, especially when $\lfloor N\epsilon \rfloor=1$ and $\lfloor
N\epsilon \rfloor=N-1$. However, for the general case $\lfloor
N\epsilon \rfloor=M$, the cardinality of the candidates set is still
large, and thus the complexity of the optimal power allocation
algorithm is quite high. In the following section, we develop two
efficient sub-optimal algorithms.

\subsection{Suboptimal Power Allocation Algorithms}
In this subsection, we propose two suboptimal power allocation
schemes for Problem \ref{Problem-3}, which are given as below.

\subsubsection{Linear Programming based Channel Removal (LPCR)} To develop the first algorithm, we consider the following
problem
\begin{pro}\label{Problem-4}
\begin{align}
\min_{p_i^c,~p_i^r,~\chi_i} &\sum_{i=1}^{N} \left(\alpha p_i^c
+\beta
p_i^r\right),\label{Eq-P4-Obj}\\
\mbox{s.t.}~&~p_i^c\ge 0,~p_i^r\ge 0,~\forall i,\label{Eq-P4-Con1}\\
&\sum_{i=1}^k p_i^r- \sum_{i=1}^k T_i \le 0, \forall k
\in \left\{1,2,\cdots,N\right\}, \label{Eq-P4-Con2}\\
&\sum_{i=1}^N\chi_i \le \lfloor N\epsilon \rfloor,
\label{Eq-P4-Con3}\\
&p_i^c+p_i^r\ge p_i^{inv}
\left(1-\chi_i\right), ~\forall i \label{Eq-P4-Con4},\\
&\chi_i\in\left\{0,1\right\}, ~\forall i.
\end{align}
\end{pro}
Problem \ref{Problem-4} is a mixed-integer programming problem, and
it is easy to verify that Problem \ref{Problem-4} is equivalent to
Problem \ref{Problem-3}. Details are omitted here for brevity.

By taking $\chi_i$ as a continuous variable over $[0,1]$ instead of
a binary variable, the relaxation problem of Problem \ref{Problem-4}
is given by
\begin{pro}\label{Problem-Relaxtion}
\textbf{\emph{Lower bound of Problem \ref{Problem-4}}}
\begin{align}
\min_{p_i^c,~p_i^r,~\chi_i} &\sum_{i=1}^{N} \left(\alpha p_i^c
+\beta
p_i^r\right),\\
\mbox{s.t.}~&~\eqref{Eq-P4-Con1},\eqref{Eq-P4-Con2},\eqref{Eq-P4-Con3},\eqref{Eq-P4-Con4},\\
&~~0\le \chi_i \le 1, ~\forall i.
\end{align}
\end{pro}
Problem \ref{Problem-Relaxtion} is a linear programming problem, and
hence, it can be solved efficiently \cite{cvx}. It is worth pointing
out that Problem \ref{Problem-Relaxtion} provides us a lower bound
to Problem \ref{Problem-4}. Thus, it can be used as a benchmark to
investigate the performance of the proposed suboptimal algorithms.

Based on the results of Problem \ref{Problem-Relaxtion}, the
following suboptimal algorithm for solving Problem \ref{Problem-4}
is developed.

\begin{algorithm}[htb]
\caption{Linear Programming based Channel Removal (LPCR)}
\label{alg:HALP}
\begin{algorithmic}[1]

\STATE Solve Problem \ref{Problem-Relaxtion} by existing linear
programming solvers such as CVX \cite{cvx}.

\STATE Sort the solution
$\left[\chi_1,\chi_2,\cdots,\chi_N\right]^T$ in descending order,
and drop the first $\lfloor N\epsilon \rfloor$ slots.

\STATE Apply the greedy power allocation scheme given in
\eqref{Eq-Greedy} for the remaining time slots.
\end{algorithmic}
\end{algorithm}

\textcolor[rgb]{0.00,0.00,0.00}{The complexity analysis of Algorithm
\ref{alg:HALP} is given as follows. The worst-case complexity of
solving the linear program in step $1$ is $O(N^3)$ (see
\cite{CombinOp1998}). The complexity of sorting the obtained
solution in descending order in step $2$ is $O(N\log N)$ (see
\cite{Intro2Algo}). The complexity of step $3$ is $O(N)$. Thus, the
complexity of LPCR is $O(N^3)$.}



\subsubsection{Worst-Channel Removal (WCR)}
\textcolor[rgb]{0.00,0.00,0.00}{Although the LPCR algorithm has
polynomial time complexity, it still requires solving a linear
program (Problem \ref{Problem-Relaxtion}) with complexity $O(N^3)$.
In this subsection, we propose a simpler suboptimal algorithm,
referred to as \emph{Worst-Channel Removal} (WCR), which has a worst
case complexity of $O(N \log N)$.}

\begin{algorithm}[htb]
\caption{Worst-Channel Removal (WCR)} \label{alg:WCRA}
\begin{algorithmic}[1]

\STATE Sort the time slots according to their channel power gains in
the descending order.


\STATE Drop the first $\lfloor N\epsilon \rfloor$ slots.

\STATE Apply the greedy power allocation scheme given in
\eqref{Eq-Greedy} for the remaining time slots.

\end{algorithmic}
\end{algorithm}


The idea of WCR is to remove the worst $\lfloor N\epsilon \rfloor$
channels. It is clear that WCR is in general not optimal. However,
when certain conditions are satisfied, WCR is optimal. In the
following, we investigate three conditions when WCR is optimal,
hence strengthening the motivation of using WCR as a heuristic
scheme.

\begin{thm} \label{Theorem-2}
WCR is the optimal solution of Problem \ref{Problem-3}  if the
condition $\sum_{i=1}^{N} p_i^r- \sum_{i=1}^{N} T_i=0$ is satisfied,
i.e., the harvested energy is fully consumed at the end of the
transmission.
\end{thm}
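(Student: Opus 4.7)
\smallskip

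\noindent\textbf{Proof proposal for Theorem 3.} My plan is to rewrite the total cost in a form that isolates the dependence on the dropping set $\mathcal{S}$, derive a universal lower bound valid for every feasible solution, and then show that WCR attains this bound precisely when the hypothesis of the theorem holds.

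\emph{Step 1 (reduce to choosing $\mathcal{S}$).} By Proposition~\ref{Lemma-1}, any optimal solution drops exactly $\lfloor N\epsilon\rfloor$ slots, and by Proposition~\ref{Lemma-2}, once $\mathcal{S}$ is fixed, the greedy power allocation in \eqref{Eq-Greedy} is optimal. On the kept slots the rate constraint is met with equality, so $p_i^c+p_i^r=p_i^{inv}$ for every $i\in\mathcal{S}^c$. Hence for any feasible $\mathcal{S}$ with $|\mathcal{S}|=\lfloor N\epsilon\rfloor$, the greedy cost is
\begin{align}
C(\mathcal{S}) \;=\; \alpha\sum_{i\in\mathcal{S}^c} p_i^{inv} \;-\;(\alpha-\beta)\sum_{i\in\mathcal{S}^c} p_i^{r}(\mathcal{S}),
\end{align}
where $p_i^r(\mathcal{S})$ is the harvested energy used on slot $i$ under greedy allocation restricted to $\mathcal{S}^c$.

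\emph{Step 2 (universal lower bound).} Because the energy-harvesting constraint at $k=N$ forces $\sum_{i\in\mathcal{S}^c} p_i^r(\mathcal{S})\le \sum_{i=1}^{N} T_i$, and $\alpha>\beta$, we obtain the bound
\begin{align}
C(\mathcal{S}) \;\ge\; \alpha\sum_{i\in\mathcal{S}^c} p_i^{inv} \;-\;(\alpha-\beta)\sum_{i=1}^{N} T_i,
\end{align}
valid for every feasible $\mathcal{S}$ of size $\lfloor N\epsilon\rfloor$. The right-hand side is minimized over $\mathcal{S}$ by maximizing $\sum_{i\in\mathcal{S}} p_i^{inv}$, and since $p_i^{inv}=N_0(e^R-1)/g_i$ is strictly decreasing in $g_i$, the minimizer is exactly the set of the $\lfloor N\epsilon\rfloor$ slots with the smallest channel power gains, i.e., the set chosen by WCR. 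Denote this set $\mathcal{S}_{\text{WCR}}$.

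\emph{Step 3 (WCR attains the bound under the hypothesis).} The hypothesis $\sum_{i=1}^N p_i^r=\sum_{i=1}^N T_i$ applied to the WCR solution means $\sum_{i\in\mathcal{S}_{\text{WCR}}^c} p_i^r(\mathcal{S}_{\text{WCR}})=\sum_{i=1}^N T_i$, so the inequality in Step 2 is tight for $\mathcal{S}_{\text{WCR}}$. Combined with the minimization of $\sum_{i\in\mathcal{S}^c} p_i^{inv}$ by WCR, this yields
\begin{align}
C(\mathcal{S}_{\text{WCR}}) \;=\; \alpha\sum_{i\in\mathcal{S}_{\text{WCR}}^c} p_i^{inv} - (\alpha-\beta)\sum_{i=1}^{N} T_i \;\le\; C(\mathcal{S})
\end{align}
for every feasible $\mathcal{S}$, so WCR is optimal.

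\emph{Expected obstacle.} The only subtle point is Step 3: one must be careful that the hypothesis is imposed on the WCR solution (the candidate we are trying to certify as optimal), not on some other feasible solution. Once that is understood, the argument is essentially bookkeeping in the objective and invoking Propositions~\ref{Lemma-1} and \ref{Lemma-2}. I do not foresee any technical difficulty beyond this disambiguation.
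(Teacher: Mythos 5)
Your proof is correct and follows essentially the same route as the paper: both decompose the cost for a fixed dropping set via the greedy allocation, bound the harvested-energy term by $\sum_{i=1}^N T_i$ using the $k=N$ energy-harvesting constraint, and use the fact that WCR minimizes $\sum_{i\in\mathcal{S}^c}p_i^{inv}$, with the hypothesis making WCR attain the bound. Your phrasing as a universal lower bound attained by WCR is a slightly cleaner packaging of the paper's direct comparison, but the argument is the same.
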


\begin{proof}
%
%
%
Let $\mathcal {S}$ be the set given in WCR, and we assume that WCR
satisfies the condition $\sum_{i=1}^{N} p_i^r- \sum_{i=1}^{N}
T_i=0$. Then, according to the proof of
\textcolor[rgb]{0.00,0.00,0.00}{Lemma 1}, for a given set, Problem
\ref{Problem-3} can be converted to Problem \ref{Proof-Problem}.
Thus, under the above assumptions, the value of the objective
function under WCR is
\begin{align}\label{EQ-T1Proof-1}
\alpha \sum_{i \in \mathcal {S}^c} p_i^{inv}
-\left(\alpha-\beta\right) \sum_{i=1}^N T_i.
\end{align}

Let $\hat{\mathcal {S}}$ be any feasible solution set (other than
$\mathcal {S}$) of Problem \ref{Problem-3}, the value of the
objective function under $\hat{\mathcal {S}}$ is then given by
\begin{align}\label{EQ-T1Proof-2}
\alpha \sum_{i \in \hat{\mathcal {S}}^c} p_i^{inv}
-\left(\alpha-\beta\right) \sum_{i \in \hat{\mathcal {S}}^c}  p_i^r,
\end{align}
where $\hat{\mathcal {S}}^c$ is the complement of $\hat{\mathcal
{S}}$. Since $\hat{\mathcal {S}}$ is a feasible solution of Problem
\ref{Problem-3}, it can be observed that $\sum_{i \in \hat{\mathcal
{S}}^c}  p_i^r \le \sum_{i=1}^N T_i$. Since ${\mathcal {S}}$
contains the $\lfloor N\epsilon \rfloor$ time slots with weakest
channel power gains, it is easy to verify that $\sum_{i \in \mathcal
{S}^c} p_i^{inv} \le \sum_{i \in \hat{\mathcal {S}}^c} p_i^{inv}$.
Based on these observations, \textcolor[rgb]{0.00,0.00,0.00}{it is
clear that \eqref{EQ-T1Proof-1} is always lower than
\eqref{EQ-T1Proof-2}}. Thus, Theorem \ref{Theorem-2} is proved.
\end{proof}

\textcolor[rgb]{0.00,0.00,0.00}{This theorem can be explained in the
following way. For any resource allocation schemes that consume all
the harvested energy, the cost of the renewable energy for these
schemes is the same. Thus, the cost difference among these schemes
comes from the cost of the conventional energy. Thus, the scheme
consuming less conventional energy has a lower total energy cost.
Therefore, it is clear that WCR is optimal when all the harvested
energy is consumed.}

\begin{thm} \label{Theorem-3}
WCR is the optimal solution of Problem \ref{Problem-3} if no
conventional energy is consumed during  the whole transmission
process.
\end{thm}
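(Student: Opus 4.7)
The plan is to parallel the structure of the proof of Theorem~\ref{Theorem-2}, but to exploit the pricing inequality $\alpha > \beta$ on the non-dropped slots rather than the total-harvested-energy invariant. Let $\mathcal{S}$ denote the outage set produced by WCR, so $|\mathcal{S}| = \lfloor N\epsilon \rfloor$ and $\mathcal{S}$ consists of the $\lfloor N\epsilon \rfloor$ slots with the smallest channel power gains (equivalently, the largest values of $p_i^{inv}$). The hypothesis that no conventional energy is consumed under WCR gives $p_i^c = 0$ for all $i$ and, by Proposition~\ref{Lemma-2}, $p_i^r = p_i^{inv}$ for every $i \in \mathcal{S}^c$. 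Hence the WCR cost equals $\beta \sum_{i \in \mathcal{S}^c} p_i^{inv}$.

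Next, I would lower bound the cost of any alternative feasible solution. By Proposition~\ref{Lemma-1}, we may restrict attention to outage sets $\hat{\mathcal{S}}$ with $|\hat{\mathcal{S}}| = \lfloor N\epsilon \rfloor$; and by the reduction used in the proof of Proposition~\ref{Lemma-2}, the associated inversion constraint becomes $p_i^c + p_i^r \geq p_i^{inv}$ for every $i \in \hat{\mathcal{S}}^c$, with equality at the optimum. Using $\alpha > \beta$ and $p_i^c \geq 0$, for each such slot
\begin{align}
\alpha p_i^c + \beta p_i^r \;\geq\; \beta(p_i^c + p_i^r) \;\geq\; \beta\, p_i^{inv},
\end{align}
so the cost incurred by any feasible allocation on the complement $\hat{\mathcal{S}}^c$ is at least $\beta \sum_{i \in \hat{\mathcal{S}}^c} p_i^{inv}$.

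Finally, because WCR drops precisely the $\lfloor N\epsilon \rfloor$ slots with the largest channel inversion powers, the quantity $\sum_{i \in \mathcal{S}^c} p_i^{inv}$ is the smallest possible sum of $p_i^{inv}$ over the complement of any $\lfloor N\epsilon \rfloor$-subset of $\{1,\dots,N\}$. Therefore $\sum_{i \in \mathcal{S}^c} p_i^{inv} \leq \sum_{i \in \hat{\mathcal{S}}^c} p_i^{inv}$, and chaining the three estimates gives
\begin{align}
\beta \sum_{i \in \mathcal{S}^c} p_i^{inv} \;\leq\; \beta \sum_{i \in \hat{\mathcal{S}}^c} p_i^{inv} \;\leq\; \alpha \sum_{i \in \hat{\mathcal{S}}^c} p_i^c + \beta \sum_{i \in \hat{\mathcal{S}}^c} p_i^r,
\end{align}
which is the desired optimality of WCR.

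The only subtle point is interpreting the hypothesis: one must read ``no conventional energy is consumed'' as a property of the WCR allocation itself (which is feasible by hypothesis), so that the WCR cost collapses cleanly to $\beta \sum_{i \in \mathcal{S}^c} p_i^{inv}$. With that reading the proof is essentially a two-line comparison; no causality arguments on the energy harvesting constraints are needed, because the lower bound on the alternative's cost uses only the inversion and non-negativity constraints, both of which are implied by feasibility for Problem~\ref{Problem-3}.
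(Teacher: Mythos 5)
Your proof is correct and follows essentially the same route as the paper's: collapse the WCR cost to $\beta\sum_{i\in\mathcal{S}^c}p_i^{inv}$, lower-bound any competitor's cost via $\alpha p_i^c+\beta p_i^r\ge\beta(p_i^c+p_i^r)\ge\beta p_i^{inv}$, and finish with the observation that WCR minimizes $\sum_{\text{kept}}p_i^{inv}$ over all admissible outage sets. Your use of non-strict inequalities is in fact slightly more careful than the paper's strict ones (which fail when the competitor also uses no conventional energy), but the argument is the same.
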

\begin{proof}
Let $\mathcal {S}$ be the set given in WCR, and it satisfies the
condition that no conventional energy is consumed during the whole
transmission process. Since no conventional energy is consumed, the
value of the objective function of Problem \ref{Problem-3} under WCR
is given by
\begin{align}\label{Eq-T3Proof-1}
\sum_{i \in \mathcal {S}^c} \beta
p_i^r,~\mbox{where}~p_i^r=p_i^{inv}, \forall i \in \mathcal
{S}^c.\end{align}

Let $\hat{\mathcal {S}}$ be any feasible solution set (other than
$\mathcal {S}$) of Problem \ref{Problem-3}, the value of the
objective function under $\hat{\mathcal {S}}$ is then given by
\begin{align}\label{Eq-T3Proof-2} \sum_{i \in \hat{\mathcal {S}}^c} \left(\alpha p_i^c +\beta
p_i^r\right),~\mbox{where}~p_i^c+p_i^r= p_i^{inv},\forall i \in
\hat{\mathcal {S}}^c.
\end{align}

From the fact that $\alpha>\beta$, it follows that $\sum_{i \in
\hat{\mathcal {S}}^c} \left(\alpha p_i^c +\beta p_i^r\right)>
\sum_{i \in \hat{\mathcal {S}}^c} \beta \left(p_i^c + p_i^r\right)$.
It is easy to verify that $\sum_{i \in \hat{\mathcal {S}}^c} \beta
\left(p_i^c + p_i^r\right)=\sum_{i \in \hat{\mathcal {S}}^c} \beta
p_i^{inv}>\sum_{i \in \mathcal {S}^c} \beta p_i^{inv}$ due to the
fact that ${\mathcal {S}}$ contains the $\lfloor N\epsilon \rfloor$
time slots with weakest channel power gains. Thus, it is clear that
\eqref{Eq-T3Proof-1} is always less than \eqref{Eq-T3Proof-2}.
Theorem \ref{Theorem-3} is thus proved.
\end{proof}

\textcolor[rgb]{0.00,0.00,0.00}{This theorem can be explained in the
following way. For any resource allocation schemes that consume no
conventional energy, the total energy cost is only determined by the
cost of the renewable energy. Thus, the scheme consuming less
renewable energy has a lower total energy cost. Therefore, it is
clear that WCR is optimal when no conventional energy is consumed.}

\begin{thm} \label{Theorem-6}
For any type of non-decreasing (over time) channel (e.g., AWGN
channel), WCR gives the optimal solution of Problem \ref{Problem-3}.
\end{thm}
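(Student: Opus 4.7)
The plan is to derive Theorem 6 as a direct consequence of Theorem 4 combined with Proposition 1. First I would translate the hypothesis: if $g_1 \le g_2 \le \cdots \le g_N$, then the channel inversion powers $p_i^{inv} = N_0(e^R-1)/g_i$ satisfy $p_1^{inv} \ge p_2^{inv} \ge \cdots \ge p_N^{inv}$. Let $M = \lfloor N\epsilon \rfloor$; WCR, which drops the $M$ slots with the weakest channels, therefore drops exactly the prefix $\{1,2,\ldots,M\}$.

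Next, I would apply Theorem 4 to each slot $i \in \{M+1, M+2, \ldots, N\}$. For such an $i$, the slots $1, 2, \ldots, i-1$ all lie before $i$ and all have channel inversion power at least as large as $p_i^{inv}$, since the sequence $p_k^{inv}$ is non-increasing. Hence there are at least $i-1 \ge M$ slots preceding $i$ with inversion power no smaller than $p_i^{inv}$, so Theorem 4 forces $i \notin \mathcal{S}^*$. This shows $\mathcal{S}^* \subseteq \{1,2,\ldots,M\}$. Combined with Proposition 1, which gives $|\mathcal{S}^*| = M$, we conclude $\mathcal{S}^* = \{1,2,\ldots,M\}$. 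Applying the greedy power allocation of Proposition 2 on $(\mathcal{S}^*)^c$ then yields exactly the WCR output, proving optimality.

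The one subtlety I would need to address is that Theorem 4 is stated with a strict inequality (\emph{more} channel inversion power), whereas a merely non-decreasing channel may produce ties. In that case, for any $i > M$ and any $j < i$ with $p_j^{inv} = p_i^{inv}$, one can apply an arbitrarily small perturbation argument, or simply observe that any tie-breaking rule in WCR yields the same objective value because interchanging $i$ and $j$ in the drop set does not alter $\sum_{i \in \mathcal{S}^c} p_i^{inv}$ and, by the greedy allocation of Proposition 2 applied chronologically, does not alter the split between harvested and conventional energy either. Thus the conclusion $\mathcal{S}^* = \{1,\ldots,M\}$ holds up to relabeling of ties.

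I expect the main obstacle to be only this tie-handling step; the rest is a clean reduction to the already-proved Theorem 4 and Proposition 1. Since Theorem 4 was itself derived by iterating the exchange argument of Theorem 5, no new machinery is required and the proof can be kept short, essentially observing that non-decreasing channels are the extreme case in which the hypothesis of Theorem 4 is satisfied simultaneously by every slot beyond position $M$.
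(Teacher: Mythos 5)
Your route is genuinely different from the paper's. The paper does not invoke Theorem~\ref{Theorem-4} at all: it proves Theorem~\ref{Theorem-6} by a direct exchange argument, comparing the drop set $\{1,\dots,M\}$ against $\{2,\dots,M,M+1\}$ and showing the former is no worse because (a) $g_1\le g_{M+1}$ makes the inversion power of the kept later slot no larger, and (b) the cumulative energy-harvesting constraint at slot $M+1$ ($p^r\le\sum_{i=1}^{M+1}T_i$) is looser than at slot $1$ ($p^r\le T_1$), so the later slot can absorb at least as much cheap renewable energy. Your reduction via Theorem~\ref{Theorem-4} plus Proposition~\ref{Lemma-1} is cleaner and shorter when the gains are \emph{strictly} increasing, and it correctly exposes that the non-decreasing case is the extremal instance of Theorem~\ref{Theorem-4}'s hypothesis.

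The gap is in your tie-handling, and it is not a peripheral one: for the AWGN channel --- the paper's flagship example --- \emph{all} slots are tied, so Theorem~\ref{Theorem-4} (which requires strictly \emph{more} inversion power) gives no information about any slot, and the entire theorem rests on the tie argument. Your second fix is false: interchanging two tied slots $j<i$ in the drop set \emph{does} in general alter the renewable/conventional split, because the harvesting constraint is causal. Take $N=2$, $M=1$, $g_1=g_2$ so $p_1^{inv}=p_2^{inv}=p$, with $T_1=0$, $T_2=p$: dropping slot $1$ costs $\beta p$, while dropping slot $2$ costs $\alpha p>\beta p$. The correct statement for ties is an \emph{inequality} in favor of dropping the earlier slot (more accumulated harvest is available at the later kept slot), which is exactly observation (b) in the paper's exchange argument --- so this step cannot be waved away as cost-neutral relabeling, and indeed WCR's tie-breaking convention (drop the \emph{earlier} of two equal-gain slots) is essential for the theorem to hold. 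Your first fix, perturbing to strictly increasing gains and passing to the limit, does work (the cost of each fixed drop set under greedy allocation is continuous in the $p_i^{inv}$, and there are finitely many drop sets), but as written it is only gestured at; you would need to carry it out, or else reproduce the exchange inequality directly as the paper does.
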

\begin{proof}
For any type of non-decreasing (over time) channel (e.g., AWGN
channel), WCR is equivalent to dropping the first $\lfloor N\epsilon
\rfloor$ slots.  Let $\mathcal {S}$ be the set that we drop the
first $\lfloor N\epsilon \rfloor$ slots, i.e., $\mathcal
{S}=\left\{g_1, g_2,\cdots,g_{\lfloor N\epsilon \rfloor}\right\}$.
To guarantee the QoS of the user during the remaining time slots,
the transmit power required is $p_i^c+p_i^r= p_i^{inv},\forall i \in
{\mathcal {S}}^c$, where $\mathcal {S}^c=\left\{g_{\lfloor N\epsilon
\rfloor+1}, g_{\lfloor N\epsilon \rfloor+2},\cdots,g_N\right\}$.

Now, we consider the set $\hat{\mathcal {S}}=\left\{
g_2,\cdots,g_{\lfloor N\epsilon \rfloor}, g_{\lfloor N\epsilon
\rfloor+1}\right\}$. To guarantee the QoS of the user during the
remaining time slots, the transmit power required is
$\hat{p_i}^c+\hat{p_i}^r= p_i^{inv},\forall i \in \hat{\mathcal
{S}}^c$, where $\hat{\mathcal {S}}^c=\left\{g_{1}, g_{\lfloor
N\epsilon \rfloor+2},\cdots,g_N\right\}$. Since the channel is
non-decreasing, it is clear that $g_1 \le g_{\lfloor N\epsilon
\rfloor+1}$. Thus, it follows $p_{\lfloor N\epsilon
\rfloor+1}^c+p_{\lfloor N\epsilon \rfloor+1}^r
\le\hat{p_1}^c+\hat{p_1}^r$. For other time slots, we have
$p_{i}^c+p_{i}^r =\hat{p_i}^c+\hat{p_i}^r$. Now, we look at the
energy harvest constraints under $\mathcal {S}$ and $\hat{\mathcal
{S}}$, respectively. Under $\mathcal {S}$, we have $p_{\lfloor
N\epsilon \rfloor+1}^r \le \sum_{i=1}^{\lfloor N\epsilon \rfloor+1}
T_i$. While under $\hat{\mathcal {S}}$, we have $\hat{p_1}^r \le
T_1$. All the remaining energy harvest constraints are exactly the
same for the two cases. Thus, we are always able to set $p_{\lfloor
N\epsilon \rfloor+1}^r =\hat{p_1}^r$ and $p_{i}^r =\hat{p_i}^r$ for
all the remaining time slots. Then, it is observed that the
resultant total energy cost under $\mathcal {S}$ is always less or
equal to that under $\hat{\mathcal {S}}$. Using the same approach,
we can prove that the total energy cost under $\mathcal {S}$ is
lower than that under any other feasible solution set of Problem
\ref{Problem-3}.
%
%
\end{proof}

\textcolor[rgb]{0.00,0.00,0.00}{This theorem can be explained in the
following way. For non-decreasing (over time) channels, the channel
inversion power for latter slots is equal to or lower than that for
the former slots. Besides, the renewable energy available for the
latter slots is in general more than that for the former slots.
Thus, dropping the former slots always results in a lower total
energy cost. Therefore, it is clear that WCR is optimal for any type
of non-decreasing channel. }

\subsection{The Multi-Cycle Scenario}
\label{Sec-extension} In the previous subsections, we consider the
single-cycle scenario, i.e., the outage constraint is imposed over
$N$ continuous slots from one cycle. In this subsection, we consider
the multi-cycle scenario, in which the outage constraint is imposed
on each cycle. We assume that there are $M$ cycles, and each cycle
has $N$ time slots. In each cycle, the maximum number of slots that
can be dropped is $K$. Then, the energy cost minimization problem
with energy harvesting constraints can be formulated as follows.


\begin{pro}\label{Problem-SM} 
\begin{align}
\min_{p_i^c,~p_i^r} &\sum_{i=1}^{MN} \left(\alpha p_i^c +\beta
p_i^r\right),\\
\mbox{s.t.}~&~p_i^c\ge 0,~p_i^r\ge 0, ~\forall i,\\
&\sum_{i=1}^k p_i^r- \sum_{i=1}^k T_i \le 0, ~\forall k
\in \left\{1,2,\cdots,MN\right\}, \\
&\sum_{i=(j-1)N+1}^{(j-1)N+N}\chi_i \le K, ~\forall j\in
\left\{1,2,\cdots,M\right\},\label{eq-45}
\end{align}
\end{pro}
where
\begin{align}
\chi_i=\left\{\begin{array}{ll}
                1, & ~\mbox{if}~\ln(1+\frac{g_i\left(p_i^c +
p_i^r\right)}{N_0})<R_i, \\
                0, & ~\mbox{otherwise}.
              \end{array}
\right..
\end{align}
Since Problem \ref{Problem-3} is a special case of Problem
\ref{Problem-SM}, we expect  the optimal solution of Problem
\ref{Problem-SM} to be hard to obtain. Thus, in this subsection, we
develop two suboptimal algorithms to solve Problem \ref{Problem-SM}
based on the LPCR and WCR developed for the one-cycle case. The
extension from the one-cycle case to the multi-cycle case depends on
an important property of Problem \ref{Problem-SM}, which is
presented in the following proposition.
\begin{prop}\label{Pro-1}
At the optimal solution of Problem \ref{Problem-SM}, the constraints
given by \eqref{eq-45} must hold with equality, i.e.,
$\sum_{i=(j-1)N+1}^{(j-1)N+N}\chi_i=K, ~\forall j\in
\left\{1,2,\cdots,M\right\}$.
\end{prop}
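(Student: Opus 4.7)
The plan is to argue by contradiction, following the same structure as the proof of Proposition \ref{Lemma-1} but taking care that the per-cycle outage constraints decouple while the energy harvesting constraints are cumulative across cycles. Suppose an optimal solution $\{p_i^{c*}, p_i^{r*}\}$ exists for which some cycle $j^*$ satisfies the strict inequality
\begin{align}
\sum_{i=(j^*-1)N+1}^{(j^*-1)N+N}\chi_i^* < K.
\end{align}
I will exhibit a strictly cheaper feasible solution, contradicting optimality.

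Since strictly fewer than $K$ of the $N$ slots in cycle $j^*$ are in outage, at least $N-K+1$ slots in that cycle have $\chi_i^* = 0$, hence $p_i^{c*} + p_i^{r*} \ge p_i^{\mathrm{inv}} > 0$. Pick any such slot $i^*$ and form a new candidate solution by setting $\tilde p_{i^*}^c = \tilde p_{i^*}^r = 0$, leaving every other $p_i^c, p_i^r$ unchanged. Under $\tilde p$, the new indicator $\tilde \chi_{i^*} = 1$, so the outage count in cycle $j^*$ grows by exactly one, which keeps \eqref{eq-45} satisfied (its right-hand side is $K$, and the old count was at most $K-1$). The outage counts of all other cycles are unchanged. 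Feasibility of the non-negativity constraints is immediate, and each cumulative energy harvesting constraint $\sum_{i=1}^k \tilde p_i^r \le \sum_{i=1}^k T_i$ only becomes slacker because we decreased one $p_i^r$ from a non-negative value to zero without touching the others.

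Finally, the objective value changes by $-(\alpha p_{i^*}^{c*} + \beta p_{i^*}^{r*}) < 0$, since at least one of $p_{i^*}^{c*}, p_{i^*}^{r*}$ is strictly positive and $\alpha, \beta > 0$. Thus $\tilde p$ is feasible with a strictly smaller cost than the supposed optimum, a contradiction. Hence every cycle must satisfy \eqref{eq-45} with equality at optimality.

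There is no real obstacle here beyond care with bookkeeping: the only point to verify is that reducing $p_{i^*}^r$ cannot violate a later cumulative harvesting inequality, and this is automatic because such inequalities are upper bounds on cumulative consumption. The argument is essentially identical to that of Proposition \ref{Lemma-1}, relying on the observation that ``dropping'' an extra slot is always admissible within one cycle and, in the multi-cycle model, does not couple across cycles through the outage constraints.
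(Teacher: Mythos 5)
Your proof is correct and follows essentially the same approach as the paper, which simply states that Proposition \ref{Pro-1} ``can be proved by the same approach as Proposition \ref{Lemma-1}'' and omits the details. You supply exactly those details — the contradiction argument of dropping one more non-outage slot in the offending cycle — and correctly verify the only nontrivial point, namely that reducing a single $p_{i^*}^r$ to zero can only slacken the cumulative harvesting constraints and leaves the other cycles' outage counts untouched.
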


Proposition \ref{Pro-1} can be proved by the same approach as
Proposition \ref{Lemma-1}. Thus, details are omitted here for
brevity.

In the following, we present the multi-cycle LPCR and the
multi-cycle WCR, respectively.

%
%

\subsubsection{Multi-cycle LPCR} Denote the leftover harvested energy
of cycle $i$ as $L_i$, and denote the initial storage energy of
cycle $i$ as $S_i$, we can extend the LPCR to the multi-cycle
scenario, which is given as follows.
\begin{algorithm}[htb]
\caption{Multi-cycle LPCR} \label{alg:MCLPCR}
\begin{algorithmic}[1]
\STATE Initialization: $L_{0}=0$.

\FOR {$i=1:M$}

\STATE \textcolor[rgb]{0.00,0.00,0.00}{Compute the initial energy of
cycle $i$ by $S_i=L_{i-1}$.}

\STATE Solve Problem \ref{Problem-Relaxtion} with initial storage
$S_i$ for cycle $i$ by existing linear programming solvers such as
CVX.

\STATE Sort the obtained $\chi_j,\forall j$ in cycle $i$ in
descending order, and drop the first $K$ slots.

\STATE Apply the greedy power allocation scheme during the rest of
time slots in cycle $i$.

\STATE \textcolor[rgb]{0.00,0.00,0.00}{Compute the leftover
harvested energy $L_i$ at the end of cycle $i$.}

\ENDFOR

\end{algorithmic}
\end{algorithm}

The multi-cycle LPCR algorithm solves the multi-cycle problem cycle
by cycle. We note that it is the leftover harvested energy that
couples the cost minimization problem of different cycles. For
example, if no harvest energy is left for future cycles, the
optimization problem in each cycle can be solved independently. In
the following, we investigate how the leftover harvested energy
affects the energy cost.

\begin{prop}\label{Pro-3}
Define initial storage state to be $S$ ($\ge 0$). Let $\pi(S)$ be
the optimal policy at storage state $S$ and $\pi(S+\Delta)$ be the
optimal policy at storage state $S + \Delta$. Let $V(S)$  and
$V(S+\Delta)$ denote the total energy cost under $S$ and $S+\Delta$,
respectively. Then, the cost difference is bounded by $(\alpha -
\beta)\Delta$, i.e., $V(S)-V(S+ \Delta) \le (\alpha - \beta)\Delta$.
\end{prop}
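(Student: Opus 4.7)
The plan is to establish the bound by an exchange (coupling) argument: I would take the optimal policy at initial storage $S+\Delta$ and construct a \emph{feasible} policy for initial storage $S$ whose cost exceeds that of the original by at most $(\alpha-\beta)\Delta$. Since the feasibility and outage-pattern structure are decoupled from the split between renewable and conventional energy (only the sum $p_i^c+p_i^r$ appears in the rate constraint), the construction can simply trade renewable energy for conventional energy slot-by-slot.

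Concretely, write $\pi(S+\Delta) = (p_i^{c,\Delta}, p_i^{r,\Delta})$ and define the prefix ``excess''
\begin{align*}
E_k := \sum_{j=1}^{k} p_j^{r,\Delta} - \sum_{j=1}^{k} T_j - S, \qquad k=1,\dots,MN.
\end{align*}
Because $\pi(S+\Delta)$ respects the energy-harvesting constraints with initial storage $S+\Delta$, we have $E_k \le \Delta$ for every $k$. Set $D_k := \max\{0, E_1, E_2, \ldots, E_k\}$, with $D_0 := 0$, so that $D_k$ is nondecreasing and $D_{MN} \le \Delta$. Then define the shift amounts $\delta_j := D_j - D_{j-1} \ge 0$ and the candidate policy
\begin{align*}
\tilde p_j^{r} := p_j^{r,\Delta} - \delta_j, \qquad \tilde p_j^{c} := p_j^{c,\Delta} + \delta_j.
\end{align*}

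The core verifications I would carry out are: (i) $\tilde p_j^{r} \ge 0$, which follows because whenever $\delta_j>0$ we have $D_j=E_j>D_{j-1}\ge E_{j-1}$, giving $\delta_j \le E_j-E_{j-1}=p_j^{r,\Delta}-T_j \le p_j^{r,\Delta}$; (ii) the energy-harvesting constraint at storage $S$ holds, since $\sum_{j\le k}\tilde p_j^r = \sum_{j\le k}p_j^{r,\Delta}-D_k \le \sum_{j\le k}T_j+S+(E_k-D_k) \le \sum_{j\le k}T_j+S$; and (iii) the outage constraints are preserved because $\tilde p_j^{r}+\tilde p_j^{c}=p_j^{r,\Delta}+p_j^{c,\Delta}$, so the set of outage slots is identical to that of $\pi(S+\Delta)$ and still satisfies \eqref{eq-45} (or its single-cycle analogue). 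Hence $\tilde\pi$ is feasible at initial storage $S$.

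The cost bookkeeping then gives
\begin{align*}
V(S)\;\le\;\sum_{j}\!\bigl(\alpha \tilde p_j^{c}+\beta \tilde p_j^{r}\bigr)
= V(S+\Delta)+(\alpha-\beta)\sum_j \delta_j
= V(S+\Delta)+(\alpha-\beta)D_{MN}
\le V(S+\Delta)+(\alpha-\beta)\Delta,
\end{align*}
which rearranges to the claimed inequality. The main obstacle is step (i): choosing the $\delta_j$ so that renewable energy remains nonnegative while simultaneously clearing all prefix violations. This is exactly what the water-filling-type definition $\delta_j = D_j-D_{j-1}$ accomplishes, because the excess $E_k$ only grows by at most $p_k^{r,\Delta}-T_k \le p_k^{r,\Delta}$ per slot, so each upward jump of $D_k$ can be absorbed by the renewable energy actually scheduled in that slot. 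Everything else is bookkeeping.
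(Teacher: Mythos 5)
Your proposal is correct and follows essentially the same route as the paper: take the optimal policy at $S+\Delta$, keep the same outage slots and the same per-slot totals $p_i^c+p_i^r$, and swap renewable for conventional energy just enough to restore feasibility at the smaller initial storage, so the cost grows by at most $(\alpha-\beta)\Delta$. Your running-maximum bookkeeping ($D_k$, $\delta_j=D_j-D_{j-1}$) in fact constructs exactly the same substituted policy as the paper's greedy/indicator construction, and it supplies a cleaner justification of the final inequality $\sum_i \Delta_i \le \Delta$ that the paper asserts without detailed argument.
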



\begin{proof}
See Part B of the Appendix for details.
\end{proof}

\textcolor[rgb]{0.00,0.00,0.00}{From Proposition \ref{Pro-3}, it is
observed that with additional initial storage of $\Delta$, the
maximum cost that the user can reduce is $(\alpha -\beta)\Delta$.
Thus, if the cost increase in the previous cycles to produce
additional harvested energy $\Delta$ is larger than $(\alpha
-\beta)\Delta$, then it is clear that the resource allocation
strategies in previous cycles will not affect the resource
allocation strategies in the current and the following cycles. This
is because it is the leftover harvested energy that couples the cost
minimization problem of different cycles. Thus, if the condition
given in Proposition \ref{Pro-3} is satisfied for all the cycles,
the optimization problem in each cycle can be solved independently.
Otherwise, Algorithm \ref{alg:MCLPCR} can be used to solve the
problem. }

\subsubsection{Multi-cycle WCR}
%

It is observed that the multi-cycle LPCR algorithm requires solving
a series of linear programming problems, which may incur high
complexity for the worst-case scenario. Thus, in this part, we
develop the multi-cycle WCR, which is implemented by a simpler
suboptimal algorithm.

\begin{algorithm}[htb]
\caption{Multi-cycle WCR} \label{alg:McycleWCRA}
\begin{algorithmic}[1]

\STATE Sort the time slots according to the descending order of
channel power gains in each cycle.

\STATE Drop the worst $K$ time slots in each cycle.

\STATE Apply the greedy power allocation scheme during the rest of
time slots.
\end{algorithmic}
\end{algorithm}

The key idea of the multi-cycle WCR is to remove the worst $K$
channels in each cycle. It is clear that multi-cycle WCR is in
general not optimal. However, when certain conditions are satisfied,
the multi-cycle WCR gives the optimal solution. In the following, we
give out a sufficient condition for the multi-cycle WCR to be
optimal.

%


%
\begin{prop}\label{Pro-4}
If WCR is the optimal solution for each individual cycle $i,
~\forall i$, then the multi-cycle WCR is the optimal solution for
Problem \ref{Problem-SM}.
\end{prop}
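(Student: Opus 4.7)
The plan is to prove Proposition \ref{Pro-4} by forward induction on the number of cycles $M$, combining the per-cycle WCR optimality hypothesis with the Lipschitz-type bound from Proposition \ref{Pro-3}. The base case $M=1$ is immediate from the hypothesis. For the inductive step, I fix any feasible policy $\pi$ for the $M$-cycle problem and decompose its total cost as $V^\pi = V_1^\pi + V_{2:M}^\pi(S_2^\pi)$, where $V_1^\pi$ is $\pi$'s cycle-$1$ cost, $S_2^\pi$ is the harvested energy $\pi$ leaves over at the end of cycle $1$, and $V_{2:M}^\pi(S_2^\pi)$ is $\pi$'s cost over cycles $2,\ldots,M$ starting from initial storage $S_2^\pi$. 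By the induction hypothesis applied to cycles $2,\ldots,M$ as an $(M{-}1)$-cycle sub-problem with arbitrary initial storage, multi-cycle WCR is optimal for the sub-problem, so $V_{2:M}^\pi(S_2^\pi) \ge V_{2:M}^{\text{WCR}}(S_2^\pi)$, where $V_{2:M}^{\text{WCR}}(\cdot)$ denotes the multi-cycle WCR tail cost as a function of initial storage. The analogous decomposition for multi-cycle WCR uses its own leftover $S_2^{\text{WCR}}$.

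I then split into cases on $\Delta := S_2^\pi - S_2^{\text{WCR}}$. If $\Delta \le 0$, then since $V_{2:M}^{\text{WCR}}(\cdot)$ is non-increasing in its argument, $V_{2:M}^{\text{WCR}}(S_2^\pi) \ge V_{2:M}^{\text{WCR}}(S_2^{\text{WCR}})$; combined with $V_1^\pi \ge V_1^{\text{WCR}}$ from the cycle-$1$ hypothesis, this yields $V^\pi \ge V^{\text{MC-WCR}}$. If $\Delta > 0$, I apply Proposition \ref{Pro-3} iteratively to cycles $2,\ldots,M$ to bound $V_{2:M}^{\text{WCR}}(S_2^{\text{WCR}}) - V_{2:M}^{\text{WCR}}(S_2^\pi) \le (\alpha-\beta)\Delta$, and then separately establish the complementary inequality $V_1^\pi - V_1^{\text{WCR}} \ge (\alpha-\beta)\Delta$; adding these cancels the gaps and again gives $V^\pi \ge V^{\text{MC-WCR}}$.

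The key technical step and the main obstacle is proving the cycle-$1$ inequality $V_1^\pi - V_1^{\text{WCR}} \ge (\alpha-\beta)\Delta$ when $\Delta > 0$. I would handle it by writing the cycle-$1$ cost in the equivalent form $\alpha \sum_{i \in \mathcal{K}} p_i^{inv} - (\alpha-\beta)\,h$, where $\mathcal{K}$ is the kept set of $N-K$ slots and $h$ is the total harvested energy consumed in cycle $1$. Since WCR drops the $K$ slots with the largest $p_i^{inv}$, WCR minimizes $\sum_{i \in \mathcal{K}} p_i^{inv}$ over all size-$(N-K)$ kept sets; and since the greedy allocation in Proposition \ref{Lemma-2} uses harvested energy to the maximum extent allowed by the energy-harvesting constraints, $h$ is maximal under WCR. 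For any $\pi$ whose leftover exceeds WCR's by $\Delta$, the harvested consumption is $\Delta$ smaller while the kept-set inversion sum is weakly larger, so the cycle-$1$ cost difference is at least $(\alpha-\beta)\Delta$. The subtle part is justifying that this bound survives even when $\pi$'s drop set differs from WCR's, which is exactly where the per-cycle optimality hypothesis is needed to rule out alternative drop patterns that might beat this bound. With this cycle-$1$ Lipschitz inequality in hand, the induction closes.
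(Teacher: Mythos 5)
Your proof is correct and follows essentially the same route as the paper's: the same cycle-one cost identity $\alpha\sum_{i\in\mathcal{K}}p_i^{inv}-(\alpha-\beta)h$ showing that any policy leaving an extra $\Delta>0$ of harvested energy pays at least $(\alpha-\beta)\Delta$ more in the current cycle, cancelled against the at-most-$(\alpha-\beta)\Delta$ future saving guaranteed by Proposition \ref{Pro-3}. The only difference is that you make explicit the induction over cycles and the $\Delta\le 0$ case, steps the paper leaves implicit; note also that for $\Delta>0$ your cycle-one Lipschitz bound already follows from WCR minimizing the kept-set inversion sum, so the per-cycle optimality hypothesis is really only needed to secure $V_1^{\pi}\ge V_1^{\mathrm{WCR}}$ in the $\Delta\le 0$ case, not where you place it.
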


\begin{proof}
See Part C of the Appendix for details.
\end{proof}

\section{Optimal Resource Allocation \textcolor[rgb]{0.00,0.00,0.00}{With Partial CESI}}
\label{Sec-NoCSI} \textcolor[rgb]{0.00,0.00,0.00}{In previous
sections, we assume full CESI, i.e., the channel power gains (i.e.,
$[g_1, g_2, \cdots, g_N]^T$) and the energy harvesting state
information (i.e., $[T_1, T_2, $ $\cdots, T_N]^T$) is a
\emph{priori} known at the Tx. In this section, we consider the
scenario that only the channel fading statistics (i.e., partial
CESI) are available at the Tx. }Under this assumption, we model the
QoS criterion by the following the equation.
\begin{align}
\mbox{Prob}\left\{\ln(1+\frac{g_i\left(p_i^c +
p_i^r\right)}{N_0})<R\right\} \le \epsilon, \forall i \in
\left\{1,2,\cdots,N\right\}.
\end{align}
This constraint requests the outage probability of the user's
transmission in each time slot to be less than or equal to
$\epsilon$. Thus, the outage probability of the whole transmission
process is less than or equal to $\epsilon$. With this constraint,
the energy cost minimization problem is formulated as
\begin{pro}\label{Problem-1}
\begin{align}\label{P1-Obj}
\min_{p_i^c,~p_i^r} &\sum_{i=1}^{N} \left(\alpha p_i^c +\beta
p_i^r\right),\\
\mbox{s.t.}~&~p_i^c\ge 0,~p_i^r\ge 0,\forall i
\in \left\{1,2,\cdots,N\right\}\\
&\sum_{i=1}^k p_i^r- \sum_{i=1}^k T_i \le 0, \forall k
\in \left\{1,2,\cdots,N\right\}, \\
&~\mbox{Prob}\left\{\ln(1+\frac{g_i\left(p_i^c +
p_i^r\right)}{N_0})<R\right\} \le \epsilon, \nonumber\\&~~\forall i
\in \left\{1,2,\cdots,N\right\}. \label{P1-Con3}
\end{align}
\end{pro}
\textcolor[rgb]{0.00,0.00,0.00}{It is observed that we assume the
energy harvesting state information (i.e., $[T_1, T_2, $ $\cdots,
T_N]^T$) is known at the Tx in this problem formulation. However, it
is worthy pointing out that the future energy harvesting state
information is in fact not required to obtain the optimal solution,
which is given in the following theorem. That is, the optimal power
allocation $p_k^*$ in slot $k$ does not depend on $T_{k+1}, \cdots,
T_N$.}
\begin{thm}\label{Theorem-1}
The optimal solution $\boldsymbol{p}_k^*=[p_k^r,~p_k^c]^T$ for
Problem \ref{Problem-1} is given by
\begin{align}\label{eq-TxPower}
\boldsymbol{p}_k^*=\left\{\begin{array}{ll}
                          \tilde{\boldsymbol{p}}_k^*,  &~ \mbox{if}~
\frac{N_0\left(e^R-1\right)}{F^{-1}\left(\epsilon\right)}\le
\sum_{i=1}^k T_i-\sum_{i=1}^{k-1} p_i^r \\
                          \hat{\boldsymbol{p}}_k^*,  &~ \mbox{if}~
\frac{N_0\left(e^R-1\right)}{F^{-1}\left(\epsilon\right)}>
\sum_{i=1}^k T_i-\sum_{i=1}^{k-1} p_i^r
                        \end{array}
\right., \forall k, 
\end{align}
where
$\tilde{\boldsymbol{p}}_k^*=[\frac{N_0\left(e^R-1\right)}{F^{-1}\left(
\epsilon\right)}~~ 0]^T$, and
$\hat{\boldsymbol{p}}_k^*=[\sum_{i=1}^k T_i-\sum_{i=1}^{k-1} p_i^r~~
\frac{N_0\left(e^R-1\right)}{F^{-1}\left(
\epsilon\right)}-\left(\sum_{i=1}^k T_i-\sum_{i=1}^{k-1}
p_i^r\right)]^T$.
\end{thm}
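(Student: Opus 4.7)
The plan is to reduce Problem~\ref{Problem-1} to the linear program of Lemma~\ref{Lemma-0} by first converting the per-slot probabilistic outage constraint into a deterministic lower bound on the total per-slot power, and then eliminating the conventional-energy variables via an equality argument.

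First, I would unpack constraint~\eqref{P1-Con3}. The event $\{\ln(1+g_i(p_i^c+p_i^r)/N_0)<R\}$ is equivalent to $\{g_i<N_0(e^R-1)/(p_i^c+p_i^r)\}$, so its probability equals $F(N_0(e^R-1)/(p_i^c+p_i^r))$, where $F$ is the CDF of $g_i$. Since $F$ is continuous and strictly increasing (because $f(x)>0$ for all $x>0$), the outage condition is equivalent to the deterministic inequality
\begin{equation}
p_i^c+p_i^r \;\ge\; P^\star \;:=\; \frac{N_0(e^R-1)}{F^{-1}(\epsilon)}, \qquad \forall i.
\end{equation}
Thus Problem~\ref{Problem-1} reduces to a linear program in $p_i^c,p_i^r$ with the above per-slot lower bound together with~\eqref{Eq-P3-Con1}--\eqref{Eq-P3-Con2}.

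Next I would argue that at optimum the per-slot constraint must hold with equality. Since $p_i^c$ appears in the objective with a strictly positive coefficient $\alpha$ and in no other constraint except $p_i^c\ge 0$ and the per-slot bound, if $p_i^c+p_i^r>P^\star$ at some slot one could strictly decrease $p_i^c$ (down to $\max\{0,P^\star-p_i^r\}$) without violating feasibility and strictly reducing the cost. Hence $p_i^c=P^\star-p_i^r$ with $0\le p_i^r\le P^\star$. Substituting back, the objective becomes
\begin{equation}
\sum_{i=1}^N\bigl(\alpha p_i^c+\beta p_i^r\bigr)\;=\;\alpha N P^\star\;-\;(\alpha-\beta)\sum_{i=1}^N p_i^r,
\end{equation}
and since $\alpha>\beta$, minimizing it amounts to maximizing $\sum_{i=1}^N p_i^r$ subject to the box constraint $0\le p_i^r\le P^\star$ and the energy causality constraints.

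At this point the problem is exactly the linear program of Lemma~\ref{Lemma-0} with $c_i=P^\star$ and $x_i=p_i^r$. Applying that lemma gives
\begin{equation}
p_i^{r*}\;=\;\min\Bigl\{P^\star,\;\sum_{j=1}^{i} T_j-\sum_{j=1}^{i-1} p_j^{r*}\Bigr\},
\end{equation}
and then $p_i^{c*}=P^\star-p_i^{r*}$. A case split on whether $P^\star\le\sum_{j=1}^{i}T_j-\sum_{j=1}^{i-1}p_j^{r*}$ or not reproduces exactly the two alternatives $\tilde{\boldsymbol{p}}_k^*$ and $\hat{\boldsymbol{p}}_k^*$ in~\eqref{eq-TxPower}, and the expression is manifestly causal in $(T_1,\ldots,T_k)$ so the claim about not needing $T_{k+1},\ldots,T_N$ falls out for free. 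The main technical step is really the reduction of the probabilistic constraint to $p_i^c+p_i^r\ge P^\star$ and the equality-at-optimum observation; once those are in place the result is a direct invocation of Lemma~\ref{Lemma-0}.
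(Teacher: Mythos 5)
Your proposal is correct and follows essentially the same route as the paper's own proof: convert the per-slot outage probability into the deterministic bound $p_i^c+p_i^r\ge N_0(e^R-1)/F^{-1}(\epsilon)$ via monotonicity of $F$, enforce equality at the optimum, and reduce to the linear program of Lemma~\ref{Lemma-0} with $c_i=N_0(e^R-1)/F^{-1}(\epsilon)$ and $x_i=p_i^r$. If anything, your treatment of the equality-at-optimum step is slightly more explicit than the paper's.
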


\begin{proof}
It is east to observe that the constraints given in \eqref{P1-Con3}
are equivalent to
\begin{align}
\mbox{Prob}\left\{g_i<\frac{N_0\left(e^R-1\right)}{p_i^c +
p_i^r}\right\} \le \epsilon, \forall i.
\end{align}
Since the distribution of $g_i$'s is i.i.d and with the CDF
$F(\cdot)$, then it is easy to observe that
$\mbox{Prob}\left\{g_i<\frac{N_0\left(e^R-1\right)}{p_i^c +
p_i^r}\right\}=F\left(\frac{N_0\left(e^R-1\right)}{p_i^c +
p_i^r}\right)$. Consequently, we have
\begin{align}\label{Eq-1}
F\left(\frac{N_0\left(e^R-1\right)}{p_i^c + p_i^r}\right) \le
\epsilon, \forall i.
\end{align}
Define that $g(p_i)\triangleq\frac{N_0\left(e^R-1\right)}{p_i}$,
where $p_i=p_i^c + p_i^r$. Since the CDF function $F(g(p_i))$ is an
increasing function with respect to $g(p_i)$, and $g(p_i)$ is a
decreasing function with respect to $p_i$, it can be inferred that
$F(g(p_i))$ is a decreasing function with respect to $p_i$. Thus,
\eqref{Eq-1} can be converted to
\begin{align} \label{Eq-2}
p_i^c + p_i^r \ge\frac{N_0\left(e^R-1\right)}{F^{-1}\left(
\epsilon\right)}, \forall i,
\end{align}
where $F^{-1}(\cdot)$ is the inverse function of  $F(\cdot)$. Thus,
to minimize the power consumption,  \eqref{Eq-2} should hold with
equality for each $i$, i.e., $p_i^c + p_i^r
=\frac{N_0\left(e^{R}-1\right)}{F^{-1}\left( \epsilon\right)},
\forall i. $ Then, it follows that $p_i^r
\le\frac{N_0\left(e^{R}-1\right)}{F^{-1}\left( \epsilon\right)},
\forall i, $ and \textcolor[rgb]{0.00,0.00,0.00}{$ \sum_{i=1}^N
p_i^c + \sum_{i=1}^N p_i^r
=\sum_{i=1}^N\frac{N_0\left(e^{R}-1\right)}{F^{-1}\left(
\epsilon\right)}. $} Based on these equations, Problem
\ref{Problem-1} can be converted to
\begin{align}
\min_{p_i^r}~
&\alpha\sum_{i=1}^N\frac{N_0\left(e^{R}-1\right)}{F^{-1}\left(
\epsilon\right)}-(\alpha-\beta) \sum_{i=1}^{N} p_i^r,\label{P2-Obj}\\
\mbox{s.t.}~~ &~0\le p_i^r \le
\frac{N_0\left(e^{R}-1\right)}{F^{-1}\left( \epsilon\right)},
\forall i,
\\&\sum_{i=1}^k p_i^r- \sum_{i=1}^k T_i \le 0, \forall
k \in \left\{1,2,\cdots,N\right\}.
\end{align}
This problem has the same structure as the linear optimization
problem in Lemma~\ref{Lemma-0}. Applying Lemma~\ref{Lemma-0} with
$c_i=\frac{N_0\left(e^{R}-1\right)}{F^{-1}\left( \epsilon\right)}$
and $x_i=p_i^r$ then concludes the proof.\end{proof}

\section{Numerical Results} \label{Sec-NumericalResults}
In this section, we present several numerical examples to evaluate
the performance of the proposed optimal and suboptimal algorithms.

\subsection{Simulation setup}  In the simulation, the target transmission rate $R$ of the user is set to one. The receiver noise power $N_0$
is also assumed to be one. Unless specifically declared, we assume
i.i.d. Rayleigh fading for all channels. Thus, the channel power
gains are exponentially distributed, and we assume that the mean of
the channel power gain is one. The conventional energy is priced at
$\alpha=1$ per unit, and the harvested energy is priced at
$\beta=0.2$ per unit. The incoming energy $T_i$ is modeled as a
random variable with uniform distribution over the range $[0 ~1]$,
i.e., $T\sim\mathcal {U}(0,1)$. In practice, the characteristics of
the incoming energy depends on the type of renewable energy source.
For example, it is shown in \cite{hoICCS10} that the energy can be
modeled as a Markovian chain
with memory. 
For a given type of
 energy harvester, the characteristics of the incoming energy can be
 obtained through long-term measurements. It is worth pointing out
 that the assumption of particular distributions of the channel power gains
 and the incoming energy does not change the structure of the
 problems studied and the algorithms proposed in this paper.

\subsection{Extreme Cases: $\lfloor N\epsilon \rfloor=1$ and $\lfloor N\epsilon \rfloor=N-1$}
\begin{figure}[t]
        \centering
        \includegraphics*[width=9cm]{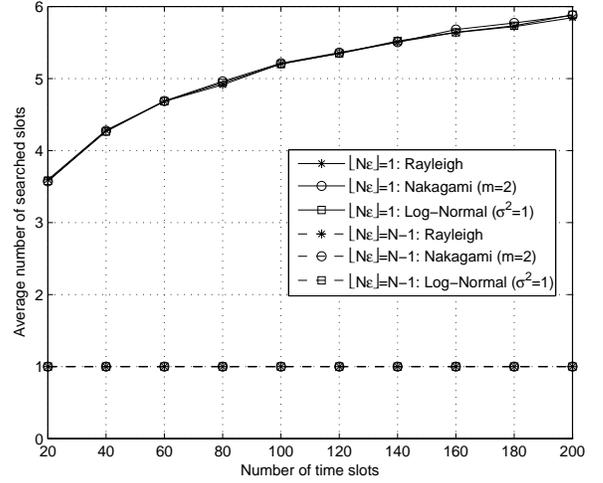}
        \caption{Average number of searched slots for the proposed algorithm under different fading scenarios}
        \label{Fig-kxfig1number}
\end{figure}
\textcolor[rgb]{0.00,0.00,0.00}{In Fig. \ref{Fig-kxfig1number}, we
investigate the average number of searched slots of Algorithm 1 and
Algorithm 2, respectively, under different fading scenarios.  In the
simulation, $m = 2$ is chosen for the unit-mean Nakagami fading
channels used. For log-normal fading channels, $\sigma^2=1$ is used,
with which the dB-spread will be within its typical ranges
\cite{kangTWC}. The average is taken over $10000$ channel
realizations. It is observed from Fig. \ref{Fig-kxfig1number} that
the average number of slots that the proposed algorithm has searched
is almost the same for different fading scenarios. Besides, it is
observed from Fig. \ref{Fig-kxfig1number} that the average number of
searched slots of Algorithm 1 increases when the number of total
slots increases. However, it increases at a very slow rate. It can
be seen that the average number of searched slots is $6$ when the
total number of slots is $200$.  The average number of searched
slots that Algorithm 2 has to search turns out to be one regardless
of the number of total slots. This is due to the fact that since we
only need to keep one slot, in most scenarios, the accumulated
harvested energy is enough to support the channel inversion power of
the slot with the highest channel power gains. For both cases, when
the exhaustive search is adopted, the number of searched slots is
$200$. This indicates that the proposed algorithms are highly
efficient as compared to the exhaustive search whose complexity is
linear.}

\subsection{General Case: $\lfloor N\epsilon \rfloor=M$}
\label{sec-M}

\begin{figure}[t]
        \centering
        \includegraphics*[width=8.5cm]{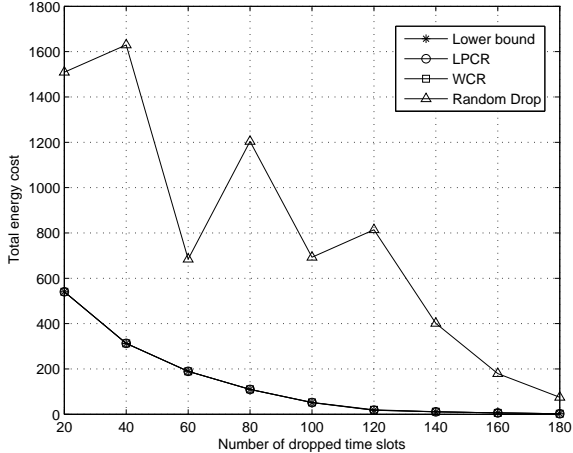}
        \caption{Comparison of the suboptimal algorithms with the lower bound}
        \label{Fig-myfig3a}
\end{figure}

In this subsection, we consider the case that there are $200$ time
slots. In Fig. \ref{Fig-myfig3a}, we plot the total energy cost vs.
the number of dropped slots. The result is obtained by averaging
over $1000$ channel realizations.
\textcolor[rgb]{0.00,0.00,0.00}{The \emph{random drop} algorithm, in
which we drop the slots randomly and apply the greedy power
allocation in the remaining time slots, is given as a baseline
policy. From Fig. \ref{Fig-myfig3a}, it is observed that the
performance of the random drop is the worst. This indicates that
optimization contributes to significant energy saving for our
problem. It is also observed that both the proposed suboptimal
schemes, namely the LPCR and the WCR, can achieve almost the same
performance as the lower bound. Furthermore, it is observed from
Fig. \ref{Fig-myfig3a} that the total energy cost decreases as the
number of dropped slots increases for LPCR, WCR and the lower-bound,
which is as expected. However, for the random drop, this does not
hold. For example, the total energy cost of dropping $60$ slots can
be lower than that of dropping $80$ slots. This can be explained as
follows. In the random drop, since the slots are dropped randomly,
it is possible that most of the $80$ dropped slots are with good
channels and most of the $60$ dropped slots are with bad channels.
As a result, the energy cost of dropping $60$ slots may be lower
than that of dropping $80$ slots.}

\begin{figure}[t]
        \centering
        \includegraphics*[width=8.5cm]{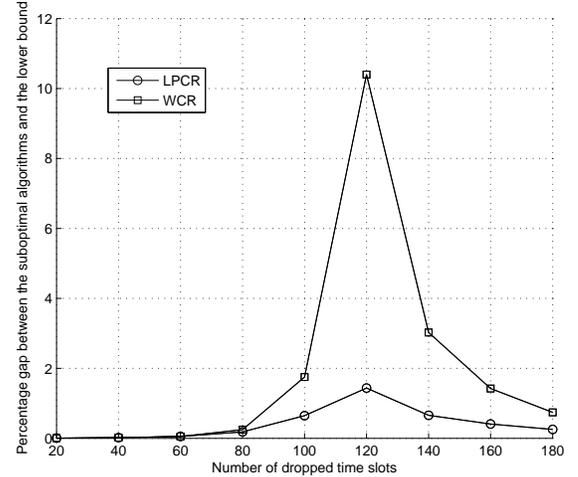}
        \caption{Gaps between the suboptimal algorithms and the lower bound}
        \label{Fig-myfig3}
\end{figure}

In Fig. \ref{Fig-myfig3}, we plot the gap between the suboptimal
schemes and the lower bound vs. the number of dropped slots. The
result is obtained by averaging over 1000 channel realizations. It
is observed that LPCR performs better than WCR in general. However,
WCR performs close to the lower-bound when the number of slots to be
dropped is either small (less than 80) or large (more than 160). For
intermediate range (about 120 slots), WCR has about a $10\%$ gap
from the lower bound. The intuition is as follows. When there are
$200$ slots, it is likely that there will be a small number of slots
that are in deep fading. At the optimal solution, these slots are
likely to be dropped as the cost required to serve these slots is
very high, even if all of them are using the cheap renewable energy.
Similarly, it is also likely that there will be a number of slots in
which the channel power gain is high, and these slots are likely to
be kept as only a small amount of energy is needed to serve these
slots. Since WCR is a greedy heuristic in which channels with low
power gains are dropped, it is likely to agree with the optimal
solution when the number of slots to be dropped is small or large.
However, at the intermediate range, in addition to dropping the
channels in deep fading and keeping channels with high power gains,
we also have to make a decision on channels with moderate power
gains. WCR only drops the channels with lower power gains, but does
not take into account the renewable energy supply pattern. For
channels with moderate power gains, the variation in the renewable
energy supply may result in channels with lower power gains being
kept in the optimal solution. This results in the sub-optimality of
WCR. In contrast, in the LPCR algorithm, we try to take into account
the variation in renewable energy through linear programming
relaxation, resulting in a better performance in the intermediate
range compared to WCR.

\subsection{The multi-cycle case}
\begin{figure}[t]
        \centering
        \includegraphics*[width=8.5cm]{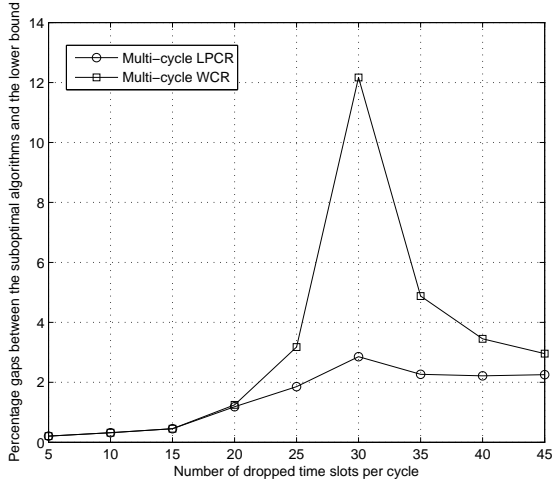}
        \caption{Gaps between the suboptimal algorithms and the lower bound: Multi-cycle case}
        \label{Fig-myfig4}
\end{figure}
\textcolor[rgb]{0.00,0.00,0.00}{In this case, we consider the case
where there are in total $200$ time slots. These time slots are
divided into $4$ cycles, and thus each cycle contains $50$ time
slots. In Fig. \ref{Fig-myfig4}, we plot the gap between the
suboptimal schemes and the lower bound vs. the number of dropped
slots in each cycle. The result is obtained by averaging over 1000
channel realizations. It is observed that the shape of the curves in
this figure is similar to that of the curves in Fig.
\ref{Fig-myfig3}. Multi-cycle LPCR in general performs better than
multi-cycle WCR. Multi-cycle WCR performs close to the lower-bound
when the number of slots to be dropped is either small or large.
This can be explained in the same way as the single-cycle case given
in Section \ref{sec-M}.}

\subsection{The partial CESI case}
\begin{figure}[t]
        \centering
        \includegraphics*[width=8.5cm]{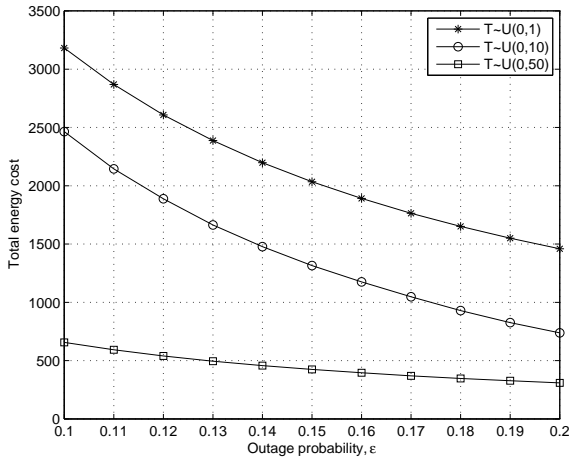}
        \caption{Performance for the proposed resource allocation scheme with partial CESI}
        \label{Fig-myfig5}
\end{figure}
\textcolor[rgb]{0.00,0.00,0.00}{In this subsection, we investigate
the performance for the proposed resource allocation scheme with
partial CESI. We consider the case that the channel fading
statistics (i.e., exponentially distributed with mean $1$) are
available at the Tx, while the energy harvesting state information
is not known at the Tx. In Fig. \ref{Fig-myfig5}, we plot the total
energy cost of the proposed scheme vs. the given outage probability
(i.e., $\epsilon$) under different energy harvesting profiles. It is
observed that the total energy cost of all curves decrease with the
increase of $\epsilon$. This is as expected since the transmit power
is in general inversely proportional to $\epsilon$, which can be
observed from \eqref{eq-TxPower}. Another observation is that for
the same $\epsilon$, the total energy cost under $T\sim\mathcal
{U}(0,50)$ is lower than that under $T\sim\mathcal {U}(0,10)$. This
indicates that the harvested energy plays a significant role in
determining the energy cost.}

\section{Conclusions}\label{Sec-Conclusions}
In this paper, we have considered the problem of communicating over
a block fading channel in which the transmitter has access to an
energy harvester and a conventional energy source, and sought to
minimize the total energy cost of the transmitter, subject to an
outage constraint. This problem is shown to be a mixed integer
programming problem. Optimal algorithms with worst case linear time
complexity have been obtained in two extreme cases: when the number
of slots in outage is $1$ or $N-1$. For the general case  of
allowing $1<k<N-1$ slots in outage, using a linear programming
relaxation, we have obtained an efficiently computable lower bound
as well as a suboptimal algorithm (upper bound), Linear Programming
based Channel Removal (LPCR), for this problem. Using a greedy
heuristic, we have also proposed another suboptimal algorithm with
lower complexity, Worst-Channel Removal (WCR), and have shown that
this algorithm is optimal under some channel conditions. Numerical
simulations indicate that these algorithms exhibit only a small gap
to the lower bound. Then, we show that the results obtained for the
single-cycle case can be extended to the multi-cycle scenario with
few modifications. Finally, when the only causal information on the
energy arrivals and only channel statistics are available, we have
introduced a new outage constraint and obtained the optimal resource
allocation.

\section*{Appendix}
\subsection{Proof of Lemma \ref{Lemma-0}}
To prove Lemma~\ref{Lemma-0}, we verify the KKT conditions for the
proposed policy. For $i \in \{1,2, \ldots, N\}$, we have
\begin{align}
 0\le x_i^* &\le c_i, \\
\sum_{i=1}^k x_i^*- \sum_{i=1}^k T_i &\le 0, \\
 \lambda_i (x_i^* - c_i) &=0 \\
 \mu_i (\sum_{j=1}^i x_j^*- \sum_{j=1}^i T_j) &=0 \\
 \gamma_i x_i^* &= 0 \\
 -(\alpha - \beta) +  \lambda_i + \sum_{j=i}^N \mu_j - \gamma_i &= 0 \\
 \lambda_i \ge 0,  \mu_i \ge 0, \gamma_i &\ge 0.
\end{align}
The first two conditions are satisfied since $x_i^* = \min\{c_i,
\sum_{j=1}^{i}T_j - \sum_{j=1}^{i-1} x^*_{j}\}$ is always chosen to
be feasible. It remains to choose $\lambda_i$, $\mu_i$ and
$\gamma_i$. To this end, we set $\gamma_i = 0$. Let $K \in \{1, 2,
\ldots, N\}$ be the largest index such that $x_K^* = \sum_{j=1}^K
T_j - \sum_{j=1}^{K-1} x_j^*$. If there is no such index, we set $K
= 0$. If $K >0$, we set $\mu_K = (\alpha - \beta)$ and $\mu_i = 0$
for $i \neq K$; we set $\lambda_i = (\alpha -\beta)$ for $i > K$ and
$\lambda_i = 0$ for $i \le K$. Note that $(\alpha -\beta) \ge 0$
since $\alpha \ge \beta$. Hence, these choices are feasible. If $K =
0$, we set $\mu_i = 0$ for all $i$ and $\lambda_i  = (\alpha -
\beta)$ for all $i$. It is now easy to verify that all the KKT
conditions are satisfied.

\subsection{Proof of Proposition \ref{Pro-3}}\label{ProofofPro3}
First, we consider the policy with initial storage state $S+\Delta$,
i.e., $\pi(S+\Delta)$. Denote the conventional energy drawn at slot
$i$ by $p_i^c(S+ \Delta)$, and the energy drawn from the renewable
by $p_i^r(S+ \Delta)$. Then, it follows that
\begin{align}
V(S+ \Delta) = \sum_{i=1}^n \left(\alpha p_i^c(S+ \Delta) + \beta
p_i^r(S+ \Delta)\right).
\end{align}
Now, we consider the policy with initial storage state $S$, i.e.,
$\pi(S)$. \textcolor[rgb]{0.00,0.00,0.00}{Let the storage state at
slot $i$ be denoted by $E_i$. For convenience, we introduce the
following indicator functions.}
\textcolor[rgb]{0.00,0.00,0.00}{\begin{align}
\chi_i=\left\{\begin{array}{ll}
                      1, & \mbox{if}~p_i^r(S+ \Delta) \le E_{i-1} + T_i, \\
                      0, & \mbox{otherwise}.
                    \end{array}
\right.\\ \overline{\chi}_i=\left\{\begin{array}{ll}
                      1, & \mbox{if}~p_i^r(S+ \Delta)> E_{i-1} + T_i, \\
                      0, & \mbox{otherwise}.
                    \end{array}
\right.
\end{align}}
Now, we drop the same time slot as in $\pi(S+\Delta)$.  Then, the
drawn energy under $\pi(S)$  can be written as follows,
\begin{align}
&p_i^r(S)  = \chi_i p_i^r(S+ \Delta) + \overline{\chi}_i\left(E_{i-1}+ T_i\right), \label{eq-60}\\
&p_i^c(S) = p_i^c(S+ \Delta) +  \overline{\chi}_i\left(p_i^r(S+
\Delta) - E_{i-1}-T_i\right).
\end{align}
The overall cost of this policy is given by
\begin{align}
V(S) &= \sum_{i=1}^n \left(\alpha p_i^c(S) + \beta p_i^r(S)\right) \nonumber\\
& = \sum_{i=1}^n \alpha (p_i^c(S+ \Delta) +  \overline{\chi}_i (p_i^r(S+ \Delta) - E_{i-1}-T_i)) \nonumber\\
& +\beta(\chi_i p_i^r(S+ \Delta) + \overline{\chi}_i(E_{i-1} +T_i)).
\end{align}
Let $E^\prime_{i}$ be the storage state for slot $i$ under policy
$\pi(S+\Delta)$. Now, we compute  the cost difference between the
two policies.{\allowdisplaybreaks
\textcolor[rgb]{0.00,0.00,0.00}{\begin{align}
&V(S)- V(S+ \Delta) \nonumber\\&= \sum_{i=1}^n \alpha( \overline{\chi}_i(p_i^r(S+ \Delta) - E_{i-1}-T_i))\nonumber\\
& \quad  + \beta \left((\chi_i -1) p_i^r(S+ \Delta)+\overline{\chi}_i(E_{i-1}+ T_i)\right)\nonumber\\
&\stackrel{a}{=}\sum_{i=1}^n (\alpha- \beta)( \overline{\chi}_i(p_i^r(S+ \Delta) - E_{i-1}-T_i)) \nonumber\\
& = \sum_{i=1}^n (\alpha- \beta)\Delta_i,
\end{align}}}
\textcolor[rgb]{0.00,0.00,0.00}{where the equality ``a'' results
from the fact that $\overline{\chi}_i=1-\chi_i$, and
$\Delta_i=\overline{\chi}_i(p_i^r(S+ \Delta) - E_{i-1}-T_i)$.}

\textcolor[rgb]{0.00,0.00,0.00}{Clearly, if $\overline{\chi}_i=0$,
$\Delta_i=0$; if $\overline{\chi}_i=1$, $\Delta_i=p_i^r(S+ \Delta) -
E_{i-1}-T_i\stackrel{b}{=}p_i^r(S+ \Delta) -p_i^r(S)$, where the
equality ``b'' results from the fact that $p_i^r(S)=E_{i-1}+T_i$
when $\overline{\chi}_i=1$ (observed from \eqref{eq-60}).} Hence, we
have
\begin{align}
V(S)- V(S+ \Delta) & = \sum_{i=1}^n (\alpha- \beta)\Delta_i \le
(\alpha -\beta)\Delta.
\end{align}
Proposition \ref{Pro-3} is thus proved.

\subsection{Proof of Proposition \ref{Pro-4}}\label{ProofofPro4}

 Let $\mathcal {S}_1$ denote the set of slots that are kept
in scheme 1, and  $\mathcal {S}_2$ denote the set of slots that are
kept in scheme 2. Denote the leftover harvested energy of scheme 1
as $L$, and that of scheme 2 as $\hat{L}$. Then, it follows that
$L=\sum_{i=1}^N T_i-\sum_{i\in \mathcal {S}_1} p_i^r$, and
$\hat{L}=\sum_{i=1}^N T_i-\sum_{i\in \mathcal {S}_2} \hat{p_i}^r.$
Then, we have \textcolor[rgb]{0.00,0.00,0.00}{\begin{align}
\Delta=\hat{L} - L&=\left(\sum_{i=1}^N T_i-\sum_{i\in \mathcal
{S}_2} \hat{p_i}^r\right)-\left(\sum_{i=1}^N T_i-\sum_{i\in \mathcal
{S}_1} p_i^r \right)\nonumber\\&=\sum_{i\in \mathcal {S}_1} p_i^r
-\sum_{i\in \mathcal {S}_2} \hat{p_i}^r.
\end{align}}
Denote the cost of scheme 1 and scheme 2 as $V$ and $\hat{V}$,
respectively. Then, the cost difference of these two schemes are
given as follows.
\begin{align}
&\hat{V}-V\nonumber\\=&\sum_{i\in \mathcal {S}_2} \left(\alpha
\hat{p}_i^c +\beta \hat{p}_i^r\right)-\sum_{i\in \mathcal {S}_1}
\left(\alpha p_i^c +\beta p_i^r\right)\nonumber\\
=&\alpha\left( \sum_{i\in \mathcal {S}_2} \hat{p}_i^c-\sum_{i\in
\mathcal {S}_1} p_i^c\right)\textcolor[rgb]{0.00,0.00,0.00}{+\beta
\left(\sum_{i\in \mathcal
{S}_2}\hat{p}_i^r-\sum_{i\in \mathcal {S}_1} p_i^r\right)}\nonumber\\
=&\alpha\left[ \kern-0.5mm\sum_{i\in \mathcal {S}_2}\kern-1mm
\left(\frac{N_0(e^R\kern-0.5mm-\kern-0.5mm1)}{g_i}\kern-0.5mm-\kern-0.5mmp_i^r\right)\kern-0.5mm-\kern-1mm\sum_{i\in
\mathcal {S}_1}\kern-1mm
\left(\frac{N_0(e^R\kern-0.5mm-\kern-0.5mm1)}{g_i}\kern-0.5mm-\kern-0.5mmp_i^r\right)\kern-0.5mm\right] \textcolor[rgb]{0.00,0.00,0.00}{\kern-1mm-\kern-0.5mm\beta\Delta}\nonumber\\
=&\alpha\left( \sum_{i\in \mathcal {S}_2}
\frac{N_0(e^R-1)}{g_i}-\sum_{i\in \mathcal {S}_1}
\frac{N_0(e^R-1)}{g_i}\right) +\left(\alpha-\beta\right) \Delta.
\end{align}
It is clear that if scheme $1$ is WCR,  we always have $\sum_{i\in
\mathcal {S}_1} \frac{N_0(e^R-1)}{g_i} < \sum_{i\in \mathcal {S}_2}
\frac{N_0(e^R-1)}{g_i}$. Consequently, we have that $\hat{V}-V >
\left(\alpha-\beta\right) \Delta$. Thus, if WCR is the optimal
solution for each individual cycle $i, ~\forall i$, then it is clear
that the multi-cycle WCR must be the optimal solution for Problem
\ref{Problem-SM}.


\begin{thebibliography}{10}
\providecommand{\url}[1]{#1} \csname url@rmstyle\endcsname
\providecommand{\newblock}{\relax} \providecommand{\bibinfo}[2]{#2}
\providecommand\BIBentrySTDinterwordspacing{\spaceskip=0pt\relax}
\providecommand\BIBentryALTinterwordstretchfactor{4}
\providecommand\BIBentryALTinterwordspacing{\spaceskip=\fontdimen2\font
plus \BIBentryALTinterwordstretchfactor\fontdimen3\font minus
  \fontdimen4\font\relax}
\providecommand\BIBforeignlanguage[2]{{%
\expandafter\ifx\csname l@#1\endcsname\relax
\typeout{** WARNING: IEEEtran.bst: No hyphenation pattern has been}%
\typeout{** loaded for the language `#1'. Using the pattern for}%
\typeout{** the default language instead.}%
\else \language=\csname l@#1\endcsname \fi #2}}

\bibitem{ICTtrends}
G.~P. Fettweis and E.~Zimmermann, ``{ICT} energy consumption-trends
and
  challenges,'' in \emph{Proc. WPMC 2008}.

\bibitem{LiYe}
G.~Y. Li, Z.~Xu, C.~Xiong, C.~Yang, S.~Zhang, Y.~Chen, and S.~Xu,
  ``Energy-efficient wireless communications: tutorial, survey, and open
  issues,'' \emph{IEEE Wireless Commun.}, vol.~18, no.~6, pp. 28--35, Dec.
  2011.

\bibitem{chen2011fundamental}
Y.~Chen, S.~Zhang, S.~Xu, and G.~Y. Li, ``Fundamental trade-offs on
green
  wireless networks,'' \emph{IEEE Commun. Mag.}, vol.~49, no.~6, pp. 30--37,
  Jun. 2011.

\bibitem{barton2004energy}
J.~P. Barton and D.~G. Infield, ``Energy storage and its use with
intermittent
  renewable energy,'' \emph{IEEE Trans. Energy Convers.}, vol.~19, no.~2, pp.
  441--448, Jun. 2004.

\bibitem{ozel2010information}
O.~Ozel and S.~Ulukus, ``Information-theoretic analysis of an energy
harvesting
  communication system,'' in \emph{Proc. IEEE PIMRC 2010}, pp. 330--335.

\bibitem{rajesh2011capacity}
R.~Rajesh, V.~Sharma, and P.~Viswanath, ``Capacity of fading
{Gaussian} channel
  with an energy harvesting sensor node,'' in \emph{Proc. IEEE GLOBECOM 2011},
  pp. 1--6.

\bibitem{yang2012optimal}
J.~Yang and S.~Ulukus, ``Optimal packet scheduling in an energy
harvesting
  communication system,'' \emph{IEEE Trans. Commun.}, vol.~60, no.~1, pp.
  220--230, Jan. 2012.

\bibitem{tutuncuoglu2012optimum}
K.~Tutuncuoglu and A.~Yener, ``Optimum transmission policies for
battery
  limited energy harvesting nodes,'' \emph{IEEE Trans. Wireless Commun.},
  vol.~11, no.~3, pp. 1180--1189, Mar. 2012.

\bibitem{ho2012optimal}
C.~K. Ho and R.~Zhang, ``Optimal energy allocation for wireless
communications
  with energy harvesting constraints,'' \emph{IEEE Trans. Signal Process.},
  vol.~60, no.~9, pp. 4808--4818, Sept. 2012.

\bibitem{ozel2011transmission}
O.~Ozel, K.~Tutuncuoglu, J.~Yang, S.~Ulukus, and A.~Yener,
``Transmission with
  energy harvesting nodes in fading wireless channels: Optimal policies,''
  \emph{IEEE J. Sel. Areas Commun.}, vol.~29, no.~8, pp. 1732--1743, Sept.
  2011.

\bibitem{yang2012optimalMAC}
J.~Yang and S.~Ulukus, ``Optimal packet scheduling in a multiple
access channel
  with energy harvesting transmitters,'' \emph{Journal of Commun. and Netw.},
  vol.~14, no.~2, pp. 140--150, Apr. 2012.

\bibitem{yang2012broadcasting}
J.~Yang, O.~Ozel, and S.~Ulukus, ``Broadcasting with an energy
harvesting
  rechargeable transmitter,'' \emph{IEEE Trans. Wireless Commun.}, vol.~11,
  no.~2, pp. 571--583, Feb. 2012.

\bibitem{huang2011throughput}
C.~Huang, R.~Zhang, and S.~Cui, ``Throughput maximization for the
{Gaussian}
  relay channel with energy harvesting constraints,'' \emph{Available at
  arXiv:1109.0724}.

\bibitem{varshney2008transporting}
L.~R. Varshney, ``Transporting information and energy
simultaneously,'' in
  \emph{Proc. IEEE ISIT 2008}, pp. 1612--1616.

\bibitem{grover2010shannon}
P.~Grover and A.~Sahai, ``Shannon meets {Tesla}: wireless
information and power
  transfer,'' in \emph{Proc. IEEE ISIT 2010}, pp. 2363--2367.

\bibitem{zhang2013mimo}
R.~Zhang and C.~K. Ho, ``{MIMO} broadcasting for simultaneous
wireless
  information and power transfer,'' \emph{IEEE Trans. Wireless Commun.},
  vol.~12, no.~5, pp. 1989--2001, May 2013.

\bibitem{liu2013wireless}
L.~Liu, R.~Zhang, and K.-C. Chua, ``Wireless information transfer
with
  opportunistic energy harvesting,'' \emph{IEEE Trans. Wireless Commun.},
  vol.~12, no.~1, pp. 288--300, Jan. 2013.

\bibitem{luo2013optimal}
S.~Luo, R.~Zhang, and T.~J. Lim, ``Optimal save-then-transmit
protocol for
  energy harvesting wireless transmitters,'' \emph{IEEE Trans. Wireless
  Commun.}, vol.~12, no.~3, pp. 1196--1207, Mar. 2013.

\bibitem{zhou2012wireless}
X.~Zhou, R.~Zhang, and C.~K. Ho, ``Wireless information and power
transfer:
  Architecture design and rate-energy tradeoff,'' in \emph{Proc. IEEE GLOBECOM
  2012}, pp. 3982--3987.

\bibitem{XKangFD2014}
X.~Kang, C.~K. Ho, and S.~Sun, ``Full-duplex wireless powered
communication
  network with energy causality,'' \emph{Available at arXiv:1404.0471}.

\bibitem{chia2013energy}
Y.-K. Chia, S.~Sun, and R.~Zhang, ``Energy cooperation in cellular
networks
  with renewable powered base stations,'' \emph{Available at arXiv:1301.4786}.

\bibitem{ng2013energy}
D.~W.~K. Ng, E.~S. Lo, and R.~Schober, ``Energy-efficient resource
allocation
  in ofdma systems with hybrid energy harvesting base station,'' \emph{IEEE
  Trans. Wireless Commun.}, vol.~12, no.~7, pp. 3412--3427, Jul. 2013.

\bibitem{OOISIT}
O.~Ozel, K.~Shahzad, and S.~Ulukus, ``Optimal scheduling for energy
harvesting
  transmitters with hybrid energy storage,'' in \emph{Proc. IEEE ISIT 2013},
  pp. 1784--1788.

\bibitem{ng2013Dec}
I.~Ahmed, A.~Ikhlef, D.~W.~K. Ng, and R.~Schober, ``Power allocation
for an
  energy harvesting transmitter with hybrid energy sources,'' \emph{IEEE Trans.
  Wireless Commun.}, vol.~12, no.~12, pp. 6255--6267, Dec. 2013.

\bibitem{CombinOp1998}
C.~H. Papadimitriou and K.~Steiglitz, \emph{Combinatorial
optimization :
  algorithms and complexity}.\hskip 1em plus 0.5em minus 0.4em\relax Dover
  Publications Inc., 1998.

\bibitem{Convexoptimization}
S.~Boyd and L.~Vandenberghe, \emph{Convex Optimization}.\hskip 1em
plus 0.5em
  minus 0.4em\relax Cambridge, UK: Cambridge University Press, 2004.

\bibitem{cvx}
M.~Grant and S.~Boyd, ``{CVX}: Matlab software for disciplined
convex
  programming, version 2.0 beta,'' \emph{Available at http://cvxr.com/cvx}.

\bibitem{Intro2Algo}
T.~H. Cormen, C.~E. Leiserson, R.~L. Rivest, and C.~Stein,
\emph{Introduction
  to Algorithms}.\hskip 1em plus 0.5em minus 0.4em\relax The MIT Press, 2009.

\bibitem{hoICCS10}
C.~K. Ho, D.~K. Pham, and C.~M. Pang, ``Markovian models for
harvested energy
  in wireless communications,'' in \emph{Proc. IEEE ICCS 2010}, pp. 311--315.

\bibitem{kangTWC}
X.~Kang, Y.-C. Liang, A.~Nallanathan, H.~K. Garg, and R.~Zhang,
``Optimal power
  allocation for fading channels in cognitive radio networks: Ergodic capacity
  and outage capacity,'' \emph{IEEE Trans. Wireless Commun.}, vol.~8, no.~2,
  pp. 940--950, Feb. 2009.

\end{thebibliography}
\end{document}